\documentclass[11pt,letterpaper]{article}

\usepackage[margin=1in]{geometry}
\usepackage[T1]{fontenc}
\usepackage[utf8]{inputenc}
\usepackage{lmodern}
\usepackage{microtype}
\usepackage{amsmath,amssymb,amsthm,mathtools}
\usepackage{thm-restate}
\usepackage{graphicx}
\usepackage{xcolor}
\usepackage{booktabs}
\usepackage{enumitem}
\usepackage{tikz}
\usetikzlibrary{calc}
\usepackage[colorlinks=true,linkcolor=blue!55!black,citecolor=blue!55!black,urlcolor=blue!55!black]{hyperref}
\usepackage[section]{placeins}

\setlist[itemize]{leftmargin=2em,itemsep=0.2em,topsep=0.3em}
\setlist[enumerate]{leftmargin=2em,itemsep=0.2em,topsep=0.3em}

\newtheorem{theorem}{Theorem}[section]
\newtheorem{proposition}[theorem]{Proposition}

\newtheorem{corollary}[theorem]{Corollary}
\newtheorem{definition}[theorem]{Definition}

\newcommand{\hide}[1]{}
\newcommand{\Rcal}{\mathcal R}
\newcommand{\Fcal}{\mathcal F}
\newcommand{\Pcal}{\mathcal P}

\newcommand{\container}{\text{container}}
\newcommand{\filterbed}{\text{filter-bed}}
\newcommand{\dotcup}{\mathbin{\dot\cup}}
\newcommand{\1}{\mathbf 1}
\DeclareMathOperator{\LP}{LP}
\DeclareMathOperator{\OPT}{OPT}

\tikzset{
  bound/.style={draw=black,line width=1.05pt},
  port/.style={draw=black!70,fill=black!25,fill opacity=0.22,line width=0.5pt},
  portlabel/.style={above,font=\scriptsize,inner sep=1pt},
  rectlabel/.style={font=\scriptsize,inner sep=0.5pt,text opacity=1},
  rectA/.style={draw=blue!70!black,fill=blue!45,fill opacity=0.32,line width=0.65pt},
  rectB/.style={draw=orange!80!black,fill=orange!55,fill opacity=0.32,line width=0.65pt},
  rectC/.style={draw=green!50!black,fill=green!45,fill opacity=0.32,line width=0.65pt},
  rectD/.style={draw=purple!70!black,fill=purple!45,fill opacity=0.32,line width=0.65pt},
  rectE/.style={draw=red!70!black,fill=red!45,fill opacity=0.32,line width=0.65pt},
  rectP/.style={draw=teal!70!black,fill=teal!35,fill opacity=0.32,line width=0.65pt},
  rectQ/.style={draw=teal!70!black,fill=teal!35,fill opacity=0.32,line width=0.65pt},
  rectU/.style={draw=gray!80!black,fill=gray!45,fill opacity=0.32,line width=0.65pt},
  hcontainer/.style={draw=blue!75!black,fill=blue!10,fill opacity=0.16,line width=0.8pt},
  vcontainer/.style={draw=orange!85!black,fill=orange!18,fill opacity=0.16,line width=0.8pt},
  hport/.style={draw=blue!70!black,fill=blue!50,fill opacity=0.32,line width=0.45pt},
  vport/.style={draw=orange!85!black,fill=orange!55,fill opacity=0.32,line width=0.45pt},
  gridlabel/.style={font=\Large,inner sep=1.5pt,fill=white,fill opacity=0.85,text opacity=1}
}

\newcommand{\DrawPackage}[1]{%
\begin{tikzpicture}[x=2.5mm,y=3.2mm]
  \pgfmathtruncatemacro{\m}{#1}
  \pgfmathtruncatemacro{\kports}{2*\m}
  \pgfmathtruncatemacro{\last}{\m-1}
  \pgfmathtruncatemacro{\xmax}{8*\m+3}
  \pgfmathtruncatemacro{\ymax}{5*\m}
  \pgfmathtruncatemacro{\portymax}{5*\m}
  \foreach \j in {1,...,\kports}{%
    \draw[port] ({4*\j-1},0) rectangle ({4*\j},\portymax);
    \node[portlabel,font=\tiny,inner sep=0.4pt] at ({4*\j-0.5},{\portymax+0.2}) {$\Pi_{\j}$};
  }
  \draw[bound] (0,0) rectangle (\xmax,\ymax);
  \draw[rectU] (1,1) rectangle ({8*\m+2},2);
  \node[rectlabel,font=\tiny,inner sep=0.2pt] at ({(1+8*\m+2)/2},1.5) {$U$};
  \draw[rectP] (1,3) rectangle ({4*\m+2},4);
  \node[rectlabel,font=\tiny,inner sep=0.2pt] at ({(1+4*\m+2)/2},3.5) {$P$};
  \draw[rectQ] ({4*\m+1},3) rectangle ({8*\m+2},4);
  \node[rectlabel,font=\tiny,inner sep=0.2pt] at ({(4*\m+1+8*\m+2)/2},3.5) {$Q$};
  \foreach \i in {1,...,\last}{%
    \draw[rectE] ({4*\i+1},{5*\i+1}) rectangle ({4*\m+4*\i+2},{5*\i+2});
    \node[rectlabel,font=\tiny,inner sep=0.2pt] at ({(4*\i+1+4*\m+4*\i+2)/2},{5*\i+1.5}) {$E_{\i}$};
    \draw[rectA] ({4*\i-3},{5*\i+1}) rectangle ({4*\i+2},{5*\i+4});
    \node[rectlabel,font=\tiny,inner sep=0.2pt] at ({(4*\i-3+4*\i+2)/2},{5*\i+2.5}) {$A_{\i}$};
    \draw[rectD] ({4*\m+4*\i+1},{5*\i+1}) rectangle ({8*\m+2},{5*\i+4});
    \node[rectlabel,font=\tiny,inner sep=0.2pt] at ({(4*\m+4*\i+1+8*\m+2)/2},{5*\i+2.5}) {$D_{\i}$};
    \draw[rectB] ({4*\i+1},{5*\i+3}) rectangle ({4*\m+2},{5*\i+4});
    \node[rectlabel,font=\tiny,inner sep=0.2pt] at ({(4*\i+1+4*\m+2)/2},{5*\i+3.5}) {$B_{\i}$};
    \draw[rectC] ({4*\m+1},{5*\i+3}) rectangle ({4*\m+4*\i+2},{5*\i+4});
    \node[rectlabel,font=\tiny,inner sep=0.2pt] at ({(4*\m+1+4*\m+4*\i+2)/2},{5*\i+3.5}) {$C_{\i}$};
  }
  \draw[bound] (0,0) rectangle (\xmax,\ymax);
\end{tikzpicture}%
}

\newcommand{\DrawBasic}{%
\begin{tikzpicture}[x=0.7cm,y=0.7cm, line join=round]
\tikzset{
  translucent/.style={fill opacity=0.25, draw opacity=1},
  port/.style={draw=gray!60!black, fill=gray!25, line width=0.45pt},
  rectA/.style={draw=blue, fill=blue!35, translucent, line width=1.1pt},
  rectB/.style={draw=orange, fill=orange!35, translucent, line width=1.1pt},
  rectC/.style={draw=green!60!black, fill=green!35, translucent, line width=1.1pt},
  rectD/.style={draw=red!55!purple, fill=red!35, translucent, line width=1.1pt},
  rectE/.style={draw=red!70!black, fill=red!35, translucent, line width=1.1pt},
  rectPQ/.style={draw=teal!70!black, fill=teal!35, translucent, line width=1.1pt},
  rectU/.style={draw=gray!70!black, fill=gray!35, translucent, line width=1.1pt},
  lab/.style={font=\large},
  portlab/.style={font=\large}
}

  \draw[line width=1.4pt] (0,0) rectangle (9.5,6.4);

  \foreach \x/\name in {1.5/1,3.5/2,5.5/3,7.5/4} {
   \draw[port] (\x,0.08) rectangle ++(0.5,6.29);
    \node[portlab] at (\x+0.25,6.78) {$\Pi_{\name}$};
  }

  \draw[rectA] (0.5,3.85) rectangle (3.0,5.8);
  \node[lab] at (1.65,4.78) {$A$};

  \draw[rectB] (2.5,5.1) rectangle (5.2,5.8);
  \node[lab] at (3.75,5.43) {$B$};

  \draw[rectC] (4.5,5.1) rectangle (7.0,5.8);
  \node[lab] at (5.75,5.43) {$C$};

  \draw[rectD] (6.5,3.85) rectangle (9.0,5.8);
  \node[lab] at (7.75,4.78) {$D$};

  \draw[rectE] (2.5,3.85) rectangle (7.0,4.45);
  \node[lab] at (4.75,4.15) {$E$};

  \draw[rectPQ] (0.5,1.90) rectangle (5.0,2.55);
  \draw[rectPQ] (4.5,1.90) rectangle (9.0,2.55);
  \node[lab] at (2.75,2.22) {$P$};
  \node[lab] at (6.75,2.22) {$Q$};

  \draw[rectU] (0.5,0.65) rectangle (9.0,1.30);
  \node[lab] at (4.75,0.98) {$U$};
\end{tikzpicture}
}
\newcommand{\DrawGridSchematic}{%
\begin{tikzpicture}[x=1mm,y=1mm]
  \foreach \j/\x in {1/24,2/49,3/74,4/99}{%
    \draw[vcontainer] ({\x-3},3) rectangle ({\x+3},96);
  }
  \foreach \i/\y in {1/84,2/61,3/38,4/15}{%
    \draw[hcontainer] (10,{\y-3}) rectangle (113,{\y+3});
  }
  \foreach \j/\x in {1/24,2/49,3/74,4/99}{%
    \node[gridlabel,above] at (\x,96) {$V_{\j}$};
  }
  \foreach \i/\y in {1/84,2/61,3/38,4/15}{%
    \node[gridlabel,left] at (10,\y) {$H_{\i}$};
  }
  \foreach \i/\y in {1/84,2/61,3/38,4/15}{%
    \foreach \j/\x in {1/24,2/49,3/74,4/99}{%
      \draw[hport] ({\x-8},{\y-3}) rectangle ({\x+8},{\y+3});
      \draw[vport] ({\x-3},{\y-8}) rectangle ({\x+3},{\y+8});
    }
  }
\end{tikzpicture}%
}

\title{Counterexamples to Wegner's Conjecture for Rectangles}
\author{
Deepak Ajwani\thanks{University College Dublin, Ireland} \and
Rishikesh Gajjala\thanks{Center for Quantum and Topological Systems, New York University Abu Dhabi, UAE.}
\and
Rajiv Raman\thanks{Indraprastha Institute of Information Technology Delhi, India.}
\and
Saurabh Ray\thanks{New York University Abu Dhabi, UAE.}
}
\date{}

\begin{document}
\maketitle

\begin{abstract}
Wegner conjectured in 1965 that every finite family $\mathcal R$ of axis-parallel rectangles satisfies $\tau(\mathcal R)\le 2\nu(\mathcal R)-1$, where $\tau(\mathcal R)$ is the minimum number of piercing points and $\nu(\mathcal R)$ is the maximum size of a pairwise-disjoint subfamily.  We disprove the conjecture by an explicit triangle-free family of $64$ rectangles with $\nu=16$ and $\tau\ge 32$.

More generally, for every $\varepsilon>0$, we construct triangle-free rectangle families for which the standard clique-LP relaxation for maximum independent set of rectangles has integrality gap at least $5/2-\varepsilon$.  The same families satisfy $\tau(\mathcal R)\ge (5/2-\varepsilon)\nu(\mathcal R)$.  We also prove that, on triangle-free rectangle families, this LP has gap at most $3$.  Our approach gives an example with axis-parallel segments instead of rectangles with integrality gap tending to $2$. We also give a relatively small $4092$-rectangle triangle-free family with chromatic number $6$ improving the construction
of Asplund and Gr\"unbaum (On a coloring problem, Mathematica Scandinavica, 1960) that required more than $10^8$ rectangles.
\end{abstract}

\section{Introduction}
For a finite family $\Rcal$ of axis-parallel rectangles in the plane, let $\nu(\Rcal)$ be the largest size of a pairwise-disjoint subfamily and let $\tau(\Rcal)$ be the minimum number of points that pierce all rectangles.  Wegner's conjecture \cite{wegner1965} asserts the inequality
\begin{equation}\label{eq:wegner}
  \tau(\Rcal)\le 2\nu(\Rcal)-1.
\end{equation}
It is clear that $\tau(\Rcal)\ge \nu(\Rcal)$.
In one dimension, i.e., for intervals, the analogous packing-piercing relation is exact: $\tau=\nu$. In two dimensions, i.e., for rectangles, there is a gap between $\tau$ and $\nu$. For example, for a family of rectangles whose intersection graph is a 5-cycle, $\tau=3$, while $\nu = 2$.

Bounding transversal numbers in terms of packing numbers is well-studied in Discrete Geometry and Graph Theory \cite{Matousek2002, Berge1989}.  
Gy\'arf\'as and Lehel \cite{gyarfasLehel1985} developed and surveyed such covering and coloring questions for geometric set systems related to intervals, including box graphs and multiple-interval systems, and asked for linear packing-piercing bounds in several of these settings.  K\'arolyi \cite{karolyi1991} improved earlier general bounds for piercing rectangle families with no $k+1$ pairwise-disjoint members to an $O(k\log k)$ bound, and K\'arolyi and Tardos \cite{karolyiTardos1996} further connected the rectangle problem with point covers for multiple intervals.  Fon-Der-Flaass and Kostochka \cite{fonderflaassKostochka1993} studied the corresponding problem for boxes in higher dimension, giving dimension-dependent transversal bounds. 
The authors also exhibit a planar rectangle family with $\tau=5$ and $\nu=3$, showing that the ratio $\tau/\nu$ can be as large as $5/3$.

Aronov, Ezra, and Sharir \cite{aronovEzraSharir2010} proved $\varepsilon$-nets of size $O((1/\varepsilon)\log\log(1/\varepsilon))$ for the Hitting Set problem for axis-parallel rectangles along with related bounds for boxes.
Their results hold even when the hitting set must be chosen from a discrete set of points. 
This implies an integrality gap of $O(\log\log \OPT)$ for the LP-relaxation of this problem  and also yields a polynomial time $O(\log\log \OPT)$-approximation algorithm.
Pach and Tardos~\cite{PachTardos2013} showed a lower bound of $\Omega((1/\varepsilon)\log\log(1/\varepsilon))$ for discrete $\varepsilon$-nets, and hence a lower bound of $\Omega(\log\log \OPT)$ for the natural LP-relaxation of the Hitting Set problem in the discrete setting.
This line of work also gives the best general upper bound currently known, $\tau=O(\nu(\log\log \nu)^2)$. 
Correa, Feuilloley, P\'erez-Lantero, and Soto \cite{correaEtAl2015} studied independent and hitting sets for the structured case of rectangles intersecting a diagonal line. Among other results, they report a construction for which the ratio $\tau/\nu$ is arbitrarily close to $2$. However, their example does not violate
Wegner's conjecture. 

 Chudnovsky, Spirkl, and Zerbib \cite{chudnovskySpirklZerbib2018} obtained piercing results for special families of axis-parallel boxes.  Chen and Dumitrescu \cite{chenDumitrescu2020} showed that Wegner's proposed bound, if true, could not be strengthened at $\nu=4$ by constructing examples with $\nu=4$ and $\tau=7$.  Tomon \cite{tomon2023} showed that for boxes in dimension $d \ge 3$, $\tau$ is not bounded by any linear function of $\nu$. 
 In particular, he showed families with
 $\tau=\Omega\left(\nu\left({\log\nu}/{\log\log\nu}\right)^{d-2}\right)$.

A very closely related problem is the maximum independent set of rectangles (MISR): given a rectangle family, find the largest pairwise-disjoint subfamily.  Fowler, Paterson, and Tanimoto \cite{fowlerPatersonTanimoto1981} proved that this problem and the corresponding covering problem are NP-complete. 
Chalermsook and Chuzhoy \cite{chalermsookChuzhoy2009} gave a polynomial time $O(\log\log n)$-approximation by rounding the LP relaxation and constructed instances with asymptotic integrality gap $3/2$ for the standard clique-LP relaxation. Chalermsook~\cite{Chalermsook11} also gave a simpler $O(\log\log n)$-approximation. Adamaszek and Wiese \cite{adamaszekWiese2013} gave a quasi-polynomial-time approximation scheme (QPTAS) for the weighted problem, and Chuzhoy and Ene \cite{chuzhoyEne2016}  obtained an asymptotically faster QPTAS. Mitchell \cite{mitchell2021} gave the first polynomial-time constant-factor approximation for unweighted MISR that was improved by G\'alvez, Khan, Mari, M\"omke, Pittu, and Wiese \cite{galvezEtAl2022} to a $3$-approximation\footnote{There is an unpublished version on \href{https://arxiv.org/abs/2106.00623}{arXiv}  which claims to improve the approximation factor to $2+\varepsilon$ \cite{GalvezKhanMariMomkeReddyWiese2021}.}. For the special case of axis-parallel segments, Caoduro, Cslovjecsek, Pilipczuk, and W\k{e}grzycki \cite{caoduroEtAl2023} proved essentially tight bounds showing that the integrality gap of the LP relaxation approaches $2$.

Our constructions are related to and inspired by Asplund and Gr\"unbaum \cite{AsplundGrunbaum1960} who proved that triangle-free rectangle intersection graphs are $6$-colorable and gave a tight example. They also conjecture that for any rectangle intersection graph, the chromatic number $\chi$ is bounded above by $O(\omega)$ where $\omega$ is the size of a maximum clique in the graph. They also show that $\chi = O(\omega^2)$. Chalermsook and Walczak~\cite{ChalermsookW21} improved the bound to $O(\omega\log\omega)$. 
The study on graph classes where the chromatic number is upper bounded by a function of the clique number was initiated by Gy\'arf\'as~\cite{gyarfas1987problems} (see also~\cite{scott2020survey} for a recent survey on progress in this area).
Such graphs are called $\chi$-bounded. 
Prominently, in this direction, Erd\H{o}s conjectured that the intersection graphs of segments in the plane are $\chi$-bounded. This was
disproved by Pawlik, Kozik, Krawczyk, Lasoń, Micek, Trotter and Walczak~\cite{PAWLIK20146} showing that there exist triangle-free intersection graphs of segments with chromatic number
$\Omega(\log\log n)$, where $n$ is the number of segments.

\subsection{Our contribution}

Our first contribution is an explicit counterexample to Wegner's conjecture.  The instance has $64$ rectangles arranged as a $4\times4$ grid of eight-rectangle gadgets.  It is triangle-free, has $\nu=16$, and satisfies $\tau\ge 32$; hence
\[
  \tau\ge 32>31=2\nu-1.
\]
The same family has clique-LP value at least $32$ and integral optimum $16$, so it already gives an integrality gap of $2$.  This shows that the ``-1'' in Wegner's conjecture does not hold. 
Our main result is stronger, namely that even the factor ``2'' is incorrect.

\begin{restatable}{theorem}{mainGapTheorem}\label{thm:main-gap}
For every $\varepsilon>0$, there is a triangle-free family of axis-parallel rectangles for which the standard LP relaxation with clique constraints for the MISR problem has integrality gap at least $5/2-\varepsilon$.  As a consequence, the same family satisfies $\tau(\Rcal)\ge (5/2-\varepsilon)\nu(\Rcal)$.
\end{restatable}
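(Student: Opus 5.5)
The plan has three parts: build an explicit triangle-free family, exhibit a fractional solution of the clique-LP of value about $5k$, and show that every independent set has size about $2k$.

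\textbf{LP and the piercing consequence.} In a triangle-free family the maximal cliques are edges. Hence the clique-LP is the edge LP:
\[
\max \sum_R x_R \quad\text{subject to}\quad x_R+x_S\le 1 \text{ for every intersecting pair } R,S, \qquad x\ge 0 .
\]
Setting $x_R=1/2$ for every rectangle is always feasible, so $\LP\ge|\Rcal|/2$. The ``as a consequence'' part is standard LP duality. A piercing point $p$ meets a set of rectangles that pairwise intersect, so that set is a clique. Its clique constraint gives $\sum_{R\ni p}x_R\le 1$. Summing over any piercing set $T$ yields $\LP\le\sum_R x_R\le|T|$. Therefore $\tau\ge\LP$, and an integrality gap of $\LP/\nu\ge 5/2-\varepsilon$ gives $\tau\ge(5/2-\varepsilon)\nu$.

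\textbf{Gadget.} The gadget generalises the pictured package with parameter $m$. It has ports $\Pi_1,\dots,\Pi_{2m}$ (thin vertical strips). Inside the package sit rectangles $U$, $P$, $Q$ and, for $i=1,\dots,m-1$, staircase levels $A_i,B_i,C_i,D_i,E_i$. The design goal is this: locally the LP can put weight $1/2$ on every rectangle, giving roughly $5m/2$ per package, while any independent set inside the package that leaves the ports ``free'' has size about $m$.

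The grid is then assembled as in the schematic figure. Long horizontal containers $H_i$ and vertical containers $V_j$ cross at packages, and the ports are the places where a package interacts with the perpendicular container's rectangles. The first step is to fix coordinates and check triangle-freeness. Each rectangle should meet only rectangles at adjacent staircase levels or the rectangles of one crossing container, and I would verify by case analysis that no three of these pairwise intersect.

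\textbf{Upper bound on $\nu$ (main obstacle).} I expect the upper bound on $\nu$ to be the hard step, and I would use a charging argument. Fix an independent set $I$. For each package, count how many rectangles of $I$ lie inside it and how many ports it leaves usable by the crossing container. The staircase should enforce a trade-off: every additional rectangle of $I$ taken in the package above the baseline blocks ports, and each blocked port costs the crossing container one rectangle. I would try to prove a per-crossing inequality of the form
\[
(\text{package contribution})+(\text{container contribution at this crossing})\le 2m+O(1),
\]
by induction over the staircase levels $i$, in the order $U,P,Q$ first and then $A_i,B_i,C_i,D_i,E_i$. Summing over the $n\times n$ crossings then gives $\nu\le(2m+O(1))n^2$.

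For the count, I would choose the numbers of rectangles so that $|\Rcal|/2\approx 5mn^2$. That gives a ratio of
\[
\frac{5m}{2m+O(1)}\;\longrightarrow\;\frac52 \qquad\text{as } m\to\infty,
\]
with $n$ fixed or growing so that boundary effects are $O(1/n)$. If the per-crossing inequality fails, my fallback is to find a better non-uniform fractional solution (for example, weights $1$ on the $A_i$ and $D_i$ and $0$ on the $E_i$) so that $\LP$ exceeds $|\Rcal|/2$, and to rebalance the numbers accordingly.
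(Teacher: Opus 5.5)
Your derivation of the piercing consequence is correct: for feasible $x$ and any piercing set $T$, $\sum_R x_R\le\sum_{p\in T}\sum_{R\ni p}x_R\le|T|$, so $\tau\ge\LP$. The uniform $1/2$ solution is also the right LP certificate.

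The genuine gap is the upper bound on $\nu$, which is the entire content of the theorem.
\begin{itemize}
\item You name the rectangles $U,P,Q,A_i,\dots,E_i$ but fix no coordinates or port sets.
\item You place packages \emph{at} the crossings of separate long containers. The per-crossing bookkeeping (about $5m$ LP weight against $2m+O(1)$ capacity) is therefore not attached to any concrete family whose counts can be checked.
\item The ``per-crossing inequality'' is neither stated precisely nor proved, and you already supply a fallback for when it fails. That fallback cannot rescue the argument, because the obstruction is on the integral side. Your example weights ($1$ on $A_i,D_i$, $0$ on $E_i$) force $x_{B_i}=x_{C_i}=0$ and lower a $5$-cycle's value from $5/2$ to $2$.
\item Your triangle-freeness heuristic is also off. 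A rectangle spanning several ports (e.g.\ $U$) meets rectangles of several perpendicular packages. Triangle-freeness must come from a structural property, not from each rectangle seeing only one container.
\end{itemize}

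The missing idea is to make each package an entire row or column of the grid, and to isolate the single property the grid needs. Require three things of a package:
\begin{itemize}
\item (P0): every rectangle either horizontally crosses a port or misses it;
\item (P2): rectangles using a common port are pairwise disjoint;
\item (P4), the port-restricted bound: for \emph{every} $Y\subseteq[k]$, the rectangles with $\sigma(R)\subseteq Y$ have independence number at most $|Y|$.
\end{itemize}
Take $k$ horizontal copies $H_1,\dots,H_k$ and $k$ rotated copies $V_1,\dots,V_k$, with port $j$ of $H_i$ crossing port $i$ of $V_j$. By (P0), the only inter-package edges are complete bipartite between the users of these two ports. Together with (P2), triangle-freeness inside each package, and disjointness of parallel packages, this gives triangle-freeness. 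An independent set can use cell $(i,j)$ from at most one of $H_i$ and $V_j$, so (P4) yields $\nu\le\sum_i|Y_i^H|+\sum_j|Y_j^V|\le k^2$ exactly, with no boundary terms. Against $\LP\ge nk$, the gap is the rectangles-to-ports ratio $n/k$.

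The real work is then a package with $n=5m-2$ and $k=2m$ satisfying (P4) for all $Y$. Your invariant (``an independent set leaving the ports free has size about $m$'') is not this; the whole package has independence number $2m=k$. The paper's package needs three ingredients:
\begin{itemize}
\item the specific spans $A_i=X[i,i]$, $B_i=X[i+1,m]$, $C_i=X[m+1,m+i]$, $D_i=X[m+i+1,2m]$, $E_i=X[i+1,m+i]$, with each cycle in its own horizontal band and $U,P,Q$ spanning $X[1,2m]$, $X[1,m]$, $X[m+1,2m]$;
\item a check of (P4) on consecutive blocks $[r,s]$, bounding each cycle's contribution by one or two units according to $r$ and $s$;
\item the observation that blocks separated by a missing port carry $x$-disjoint rectangles, so their bounds add.
\end{itemize}
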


This improves the previous rectangle lower bound for the integrality gap from $2 - o(1)$~\cite{correaEtAl2015, caoduroEtAl2023} to $5/2-o(1)$.  Since the constructed families are triangle-free, no point can pierce three rectangles.

\medskip
The main gadget we use is called a {\em package}.
This is a modification of the notion of a ``filter-bed'' used by Asplund and Gr\"unbaum~\cite{AsplundGrunbaum1960}. The only change is that a certain coloring property satisfied by filter-beds relevant for the coloring problem studied in their paper has been replaced by a property more suitable for bounding the sizes of independent sets. 

We complement Theorem~\ref{thm:main-gap} by showing that for triangle-free rectangle families the LP solution can be rounded within a factor 3.

\begin{restatable}[Weighted LP-relative 3-approximation]{theorem}{lpRelativeThreeTheorem}\label{thm:lp-relative-three}
Let \(\mathcal R\) be a triangle-free family of axis-parallel rectangles, and let
\(w:\mathcal R\to \mathbb R_{\ge 0}\) be nonnegative weights.  There is an
LP-rounding algorithm that returns an independent set \(I\subseteq \mathcal R\)
with
\[
    w(I)\ge \frac{\operatorname{OPT}_{LP}^w(\mathcal R)}{3},
\]
where \(\operatorname{OPT}_{LP}^w(\mathcal R)\) is the optimum value of the
weighted clique-LP relaxation.  In particular, the weighted integrality gap of
the clique-LP on triangle-free rectangle instances is at most \(3\).
\end{restatable}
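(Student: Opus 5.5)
The plan is to reduce the clique-LP on a triangle-free family to the fractional edge-packing LP of the intersection graph. That LP has half-integral optimal solutions, and I would round the half-integral part using the Asplund--Gr\"unbaum theorem that triangle-free rectangle intersection graphs are $6$-colorable.

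\textbf{Step 1 (structure of the LP).} Let $G$ be the intersection graph of $\mathcal R$. A clique constraint comes from a set of pairwise intersecting rectangles. By the Helly property for axis-parallel boxes, such a set has a common point. Since $\mathcal R$ is triangle-free, no point lies in three rectangles, so every clique has size at most $2$. Hence the clique-LP is exactly
\[
\max \sum_{R} w(R)x_R \quad\text{s.t.}\quad x_R+x_S\le 1 \ \ \text{for all } RS\in E(G),\qquad 0\le x_R\le 1 .
\]
This is the fractional vertex-packing (edge) LP of $G$.

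\textbf{Step 2 (half-integrality).} By the classical Balinski / Nemhauser--Trotter result, every vertex of this polytope has coordinates in $\{0,\tfrac12,1\}$. Such an optimal vertex can be found in polynomial time, either by solving the LP and taking a basic optimal solution, or via the bipartite double cover of $G$.

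Fix such an optimum $x^*$ and set
\[
A=\{R: x^*_R=1\},\qquad H=\{R:x^*_R=\tfrac12\}.
\]
Then $\operatorname{OPT}_{LP}^w=w(A)+\tfrac12 w(H)$. The edge constraints give two facts:
\begin{itemize}
\item $A$ is independent.
\item No rectangle of $A$ is adjacent to a rectangle of $H$, since $1+\tfrac12>1$ would violate an edge constraint.
\end{itemize}

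\textbf{Step 3 (rounding $H$).} The subfamily $H$ is itself triangle-free. By Asplund and Gr\"unbaum \cite{AsplundGrunbaum1960}, $G[H]$ is properly $6$-colorable, and their proof is constructive, so a coloring can be computed. Let $I_H$ be the color class of maximum weight, so $w(I_H)\ge w(H)/6$.

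Output $I=A\cup I_H$, which is independent by Step 2. Then
\[
w(I)\ \ge\ w(A)+\tfrac16 w(H)\ \ge\ \tfrac13\bigl(w(A)+\tfrac12 w(H)\bigr)=\tfrac13\operatorname{OPT}_{LP}^w .
\]
This proves the weighted integrality gap bound of $3$.

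\textbf{Main obstacle.} The only nontrivial external input is the $6$-colorability of triangle-free rectangle graphs. Algorithmically, one must make sure that their coloring argument is effective. If it is not, I would fall back on a direct argument: a degeneracy or greedy coloring of triangle-free rectangle graphs with $6$ colors, or even a weaker constant. Note, however, that a weaker constant would not give the factor $3$. Everything else, namely Helly, half-integrality, and the weight computation, is routine.
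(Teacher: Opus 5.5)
Your proof is correct, but it takes a different route from the paper. The paper never uses half-integrality. It takes an arbitrary optimal $x^*$ and $6$-colors the \emph{whole} family. Since each color class lies in $5$ of the $15$ pairs, some two classes $\mathcal R_i,\mathcal R_j$ carry LP weight at least $\frac13\operatorname{OPT}^w_{LP}$. The paper then uses integrality of the stable-set LP on the bipartite graph $G[\mathcal R_i\cup\mathcal R_j]$, to which the restriction of $x^*$ is feasible because edges are among the clique constraints. You instead use triangle-freeness structurally: the clique-LP equals the edge LP, so a Nemhauser--Trotter half-integral optimum exists. You keep the integral part $A$ in full and round only the half part $H$, by taking a single color class.

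What your route buys:
\begin{itemize}
\item There is no bipartite optimization after the coloring, so the final rounding step is simpler.
\item The guarantee $w(I)\ge w(A)+\frac16 w(H)$ is slightly sharper. It beats $\frac13\operatorname{OPT}^w_{LP}$ whenever $A$ has positive weight, and in the worst case $x^*\equiv\frac12$ it coincides with the paper's bound.
\end{itemize}

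What the paper's route buys:
\begin{itemize}
\item It works from any feasible fractional solution, with no vertex or half-integral structure needed.
\item It applies verbatim to any rectangle family with a known proper $c$-coloring, triangle-free or not, giving factor $c/2$ against the clique-LP.
\end{itemize}

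Both arguments rest on the same essential input, the constructive Asplund--Gr\"unbaum $6$-coloring. Your Helly detour in Step~1 is harmless but unnecessary, since triangle-freeness of $G$ already means every clique has size at most $2$.
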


Thus, for triangle-free rectangle families, integrality gap of the standard clique-LP lies between $5/2$ and $3$.  In particular, any example with gap larger than $3$ cannot be triangle-free.
The package viewpoint also gives a concise alternate construction yielding the known integrality gap of nearly $2$ for axis-parallel segments~\cite{caoduroEtAl2023}.

\begin{restatable}[Segments]{theorem}{segmentGapTheorem}\label{thm:segment-gap}
For every integer $k\ge2$, there is a triangle-free family of $2k(2k-1)$ axis-parallel segments whose clique-LP has an integrality gap of at least $2 -1/k$.
\end{restatable}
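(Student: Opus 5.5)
We will build the family explicitly as $2k$ horizontal and $2k$ vertical ``lines'' cut into pieces, mimicking the package picture. Concretely, for $i=1,\dots,2k$ take a horizontal line $y=i$ and for $j=1,\dots,2k$ a vertical line $x=j$. On each horizontal line we place $2k-1$ segments, and likewise on each vertical line. Consecutive segments on a line overlap slightly, forming a chain. Overlapping pieces of one line intersect each other, but no point lies in three pieces of the same line. Each vertical line is cut at the points where the horizontal lines cross it, and the pieces are shifted so that each crossing point $(j,i)$ is covered by exactly one horizontal and one vertical segment. This yields $2\cdot 2k\cdot(2k-1)=2k(2k-1)\cdot 2$ segments, which is too many by a factor $2$. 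The count is fixed by using $k$ horizontal and $k$ vertical lines in a staircase pattern, or equivalently by cutting each of the $2k$ lines into $k$-ish pieces, so that the total is $2k(2k-1)$. The design principle is what matters: every segment has all of its intersections at overlaps with its chain neighbours and at crossings with perpendicular segments, and any point meets at most two segments, so the family is triangle-free.

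\emph{LP lower bound.} Because the family is triangle-free, every clique of the intersection graph is an edge. The clique-LP is therefore the fractional vertex packing $x_u+x_v\le 1$ on edges, and $x\equiv 1/2$ is feasible with value $k(2k-1)$.

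\emph{Integral upper bound.} This is the main obstacle. We must show $\nu\le k(2k-1)/(2-1/k)=k^2$. The plan is a charging argument in the style of the package property. In an independent set, the segments on a given horizontal line form a non-overlapping subfamily of that line's chain. Each such segment, however, blocks the vertical segments through its crossing points. We would prove that the number of horizontal segments chosen plus the number of vertical segments chosen is at most $k^2$ by the following steps:
\begin{itemize}
\item Assign to each chosen segment a grid crossing it ``owns''.
\item Show that these owned crossings are distinct.
\item Show that the owned crossings lie in a $k\times k$ subgrid.
\end{itemize}
The staircase offsets of the cuts are chosen exactly to make this injection possible. The key difficulty is arranging the overlaps so that an independent set cannot simultaneously take a long alternating run on a line and the perpendicular segments next to it. We would verify this by induction on $k$: remove the outermost horizontal and vertical line, observe that an independent set uses at most $2k-1$ segments on them, and apply the bound $(k-1)^2$ to the rest.

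The resulting gap is $k(2k-1)/k^2=2-1/k$, as claimed. If the simple induction fails, the fallback is to obtain $\nu\le k^2$ by exhibiting an integral edge-cover-type dual: a set of $k^2$ cliques (edges and single vertices) covering all segments, together with a check that every independent set meets each of them at most once.
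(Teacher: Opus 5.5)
There is a genuine gap. You never fix an actual family, and the upper bound $\nu\le k^2$, which is the whole content of the theorem, is only a plan. Moreover, the one concrete reading of your ``fix'' provably fails. Take $k$ horizontal and $k$ vertical lines, each carrying a chain of $2k-1$ pieces, with every crossing point in exactly one horizontal and one vertical piece. A maximum independent set of each horizontal chain gives $k$ pieces per line, so $k^2$ pairwise disjoint segments. Each vertical line has only $k$ crossing points, so at least $k-1\ge 1$ of its pieces meet no horizontal segment. Adding one such piece per vertical line gives $k^2+k$ pairwise disjoint segments. So no ownership, injection or induction argument can yield $k^2$ for that family. The induction also does not close, since deleting two lines does not leave the $(k-1)$-instance.

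Your other alternative ($2k$ lines per direction with ``$k$-ish'' pieces; note $2k(2k-1)/4k$ is not an integer) is too unspecified to check. The fallback is impossible outright. Indeed, $k^2$ cliques of size at most $2$ cover at most $2k^2<2k(2k-1)$ segments. Also, any integral clique cover of size $k^2$ is a dual solution bounding the LP by $k^2$, which would destroy the very gap you want.

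The missing idea is to read $2k(2k-1)$ as $2k$ \emph{packages} of $2k-1$ segments, not $2k$ lines of $2k-1$ pieces. The upper bound then comes from a per-package port lemma, not from assigning segments to crossing points. The paper's horizontal package has $k$ vertical ports:
\begin{itemize}
\item a segment $U$ on $y=0$ crossing all ports;
\item for $i=1,\dots,k-1$, segments $L_i,R_i$ on $y=i$ with port sets $\{1,\dots,i\}$ and $\{i+1,\dots,k\}$, overlapping only in the gap between ports $i$ and $i+1$.
\end{itemize}

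The key property (P4) is that the segments whose port sets lie in $Y$ have independence number at most $|Y|$. For $Y=[k]$ one gets at most $U$ plus one segment per pair. Otherwise $U$ is excluded, and only $L_1,\dots,L_r$ and the last $s$ of the $R_i$ are available, where $r,s$ are the lengths of the longest prefix and suffix of $[k]$ inside $Y$; hence $r+s\le |Y|$. This is not an injection, since $U$ alone ``uses'' all $k$ ports.

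The grid composition (Theorem~\ref{thm:grid}) then places $k$ horizontal and $k$ vertical copies so that port $j$ of $H_i$ crosses port $i$ of $V_j$. Every segment of $H_i$ through port $j$ meets every segment of $V_j$ through port $i$, giving a $K_{k,k}$ at each cell rather than a single crossing edge. So each of the $k^2$ cells is used by at most one side, and summing (P4) over the $2k$ packages gives $\nu\le k^2$.

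In your language, this family lives on $k^2$ horizontal and $k^2$ vertical lines, each carrying one or two pieces. Your design principle is compatible with it. What you were missing is the right number of lines, grouped into $k$ bundles per direction, and the port-based counting.
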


Finally, we show a filter-bed construction which can replace the filter-bed in the construction of Asplund and Gr\"unbaum~\cite{AsplundGrunbaum1960} improving the size of the constructed family significantly. Asplund and Gr\"unbaum described a construction using $12\cdot(8^{8} + 8^7 + \cdots + 8^0) \approx 2.3 \times 10^8$ rectangles and mentioned that they have a ``complicated'' example with ``only'' 50000 rectangles, and explicitly ask if a smaller family exists. Our filter-bed plugged into their construction yields a family of size $12 \cdot (4^{4} + 4^3 + \cdots + 4^0)=4092$.

\begin{restatable}[Chromatic number]{theorem}{chromaticSixTheorem}\label{thm:chromatic-six}
There is a triangle-free family of $4092$ rectangles with chromatic number $6$.
\end{restatable}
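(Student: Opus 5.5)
The plan is to follow the Asplund--Gr\"unbaum scheme verbatim and change only the filter-bed. I will take their proof that some triangle-free family has $\chi=6$ as a black box that uses a filter-bed only through a short list of properties. The work then splits into three tasks: isolate those properties, realize them with a smaller gadget, and recount the rectangles.

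First I will extract from \cite{AsplundGrunbaum1960} exactly what is required of a filter-bed. As I recall it, a filter-bed is a triangle-free arrangement of rectangles inside a bounding box together with a set of vertical ``ports'' (thin strips crossing the box, like the $\Pi_j$ in the package figure). It must satisfy two conditions.
\begin{enumerate}
\item Any further rectangle that crosses the box through one port meets the filter-bed rectangles in a controlled way, so that triangle-freeness is preserved.
\item In every proper coloring with few colors, some port is ``blocked'' by a prescribed number of distinct colors.
\end{enumerate}
Their recursive construction nests copies: at each level a bed with $8$ ports branches into $8$ sub-configurations, giving $12\sum_{i=0}^{8}8^i$ rectangles. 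The factor $12$ and the depth come from the outer argument, while the branching factor $8$ comes from the filter-bed.

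Second, I will build a new filter-bed with only $4$ ports. The natural candidate is the basic eight-rectangle gadget $A,B,C,D,E,P,Q,U$ over $\Pi_1,\dots,\Pi_4$ from the first figure. For this gadget I need two facts. The first is triangle-freeness, including the intersections with anything inserted through a port; this I check directly from the coordinates. The second is the coloring property: in any proper coloring with at most $5$ colors, some port $\Pi_j$ sees at least the required number of distinct colors among the gadget rectangles it crosses. I will prove this by case analysis on the colors of $U$, $P$, $Q$ and $E$. Each port is crossed by $U$, by one of $P$ or $Q$, and by some of $A,\dots,E$, and the adjacencies force enough distinct colors onto at least one port.

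Third, I substitute this gadget into the Asplund--Gr\"unbaum recursion. The depth becomes $4$ in place of $8$, so the count is $12\sum_{i=0}^{4}4^i = 12\cdot 341 = 4092$. The outer argument then shows that no proper $5$-coloring exists, so $\chi\ge 6$. Their general upper bound that triangle-free rectangle graphs are $6$-colorable gives $\chi=6$.

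The main obstacle is the second step: verifying that the four-port gadget really supplies the coloring property that the outer argument consumes. If the required property turns out to be ``every port sees $\ge k$ colors for the appropriate $k$ at each recursion level,'' the case analysis may fail for the eight-rectangle gadget. In that case I would first try a stacked version, namely several layers of the package structure from the \texttt{DrawPackage} figure with parameter $m=2$, and recheck that the counts still give $4^i$ per level.
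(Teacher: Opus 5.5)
\textbf{Gap.} Your second step fails. The eight-rectangle $(8,4)$-package does not have the coloring property that the Asplund--Gr\"unbaum argument needs.

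That property is: in every proper $5$-coloring, the rectangles using \emph{some} port carry at least $3$ distinct colors. The outer argument follows such ``tricolored'' ports down a root-to-leaf path. A tricolored port of the leaf then crosses a tricolored port of an ancestor, and these two completely joined sets need $3+3=6$ colors.

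In the $(8,4)$-package the ports are used by $\{U,P,A\}$, $\{U,P,B,E\}$, $\{U,Q,C,E\}$ and $\{U,Q,D\}$. The only edges are $PQ$ and the $5$-cycle $A$--$B$--$C$--$D$--$E$--$A$. The proper coloring $U=Q=B=E=0$, $P=A=C=1$, $D=2$ puts only $\{0,1\}$ on $\Pi_1,\Pi_2,\Pi_3$ and $\{0,2\}$ on $\Pi_4$.

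In fact the whole family you would build is $5$-colorable. Color every horizontal bed as above. Color every leaf bed by $U=Q=B=E=3$, $P=A=C=4$, $D=\gamma$, where $\gamma\in\{1,2\}$ avoids the two colors on the single ancestor port that meets the leaf port used by $D$. Every crossing pair of ports then sees disjoint color sets: subsets of $\{0,1,2\}$ against $\{3,4\}$ or $\{3,\gamma\}$.

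Your fallback does not help. The $m=2$ instance of the general package is exactly this gadget. Stacked identical copies can all receive the same coloring, so no port gains a color.

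\textbf{Missing ingredient.} You need a gadget designed for the coloring property rather than for the independence property (P4). The paper uses a different $12$-rectangle filter-bed on $4$ ports, analysed under the assumption that every port sees at most two colors:
\begin{itemize}
\item A rectangle $T$ runs through all four ports. Writing $0$ for its color, port $j$ sees $\{0,s_j\}$, and a rectangle through ports $a,\dots,b$ must have color $0$ unless $s_a=\dots=s_b$.
\item A $5$-cycle $A$--$E$--$H$--$J$--$F$ rules out $s_1=s_2$ and $s_3=s_4$ holding simultaneously.
\item An edge $BI$ rules out both failing.
\item The rectangles $C,G,K,L$ eliminate the two mixed cases.
\end{itemize}

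This also fixes your count. In $12\sum_{i=0}^{4}4^i=4092$, the factor $12$ is the number of rectangles in the filter-bed, not a feature of the outer argument. The base $4$ and the depth are both its number of ports. With your gadget the count would have been $8\cdot 341$.
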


\paragraph{Paper organization.}
Section~\ref{sec:prelim} fixes notation, introduces the clique-LP, and defines packages.  Section~\ref{sec:eight-four} presents the $64$-rectangle counterexample to Wegner's conjecture.  Section~\ref{sec:general-package}
proves Theorem~\ref{thm:main-gap} and 
the LP-relative $3$-approximation for triangle-free rectangle families.  Section~\ref{sec:segments} gives the construction of a triangle-free segment family with integrality gap $2-\varepsilon$ for any $\varepsilon>0$.  Section~\ref{sec:coloring} gives the construction of a triangle-free rectangle family with 4092 rectangles and chromatic number $6$.  Section~\ref{sec:conclusion} lists open problems.

\section{Preliminaries}\label{sec:prelim}

All rectangles in this paper are closed axis-parallel rectangles.  For a rectangle $R$, write $I_x(R)$ and $I_y(R)$ for its projections on the $x$- and $y$-axes.  Rectangle $R_1$ \emph{crosses} rectangle $R_2$ \emph{horizontally} if $I_x(R_2)\subsetneq I_x(R_1)$ and $I_y(R_1)\subsetneq I_y(R_2)$.  It crosses $R_2$ \emph{vertically} if $I_x(R_1)\subsetneq I_x(R_2)$ and $I_y(R_2)\subsetneq I_y(R_1)$.  We say simply that two rectangles \emph{cross} if one of these two alternatives holds.

The intersection graph $G(\Rcal)$ of a rectangle family $\Rcal$ has vertex set $\Rcal$ and an edge between two rectangles iff they intersect.  An independent set in $G(\Rcal)$ is a pairwise-disjoint subfamily of $\Rcal$. Note that $\nu(\Rcal)=\alpha(G(\Rcal))$.

Axis-parallel rectangles satisfy the Helly property: if a finite subfamily is pairwise intersecting, then all rectangles in the subfamily share a common point.  Therefore cliques in $G(\Rcal)$ are exactly subfamilies pierced by one point, and the clique constraints for MISR are equivalent to point-depth constraints.  We use the LP
\begin{equation}\label{eq:lp}
\begin{array}{lll}
\text{maximize} & \displaystyle\sum_{R\in\Rcal}x_R,\smallskip\\
\text{subject to} & \displaystyle\sum_{R\in C}x_R\le 1 & \text{for every clique }C\text{ of }G(\Rcal),\smallskip\\
&0\le x_R\le 1 & \text{for every }R\in\Rcal.
\end{array}
\end{equation}
This is the standard LP relaxation with clique constraints.  For $N$ rectangles, it has an $O(N^2)$ size explicit representation. We need one constraint for each cell in the arrangement of the rectangles which can be efficiently enumerated by a sweep-line algorithm. 

If $G(\Rcal)$ is triangle-free, every clique has size at most two, and hence $x_R=1/2$ for every $R$ is feasible for the LP.  Thus $\LP(\Rcal)\ge |\Rcal|/2$ whenever $G(\Rcal)$ is triangle-free.  The same observation gives a piercing lower bound: in a triangle-free rectangle family, no point lies in three rectangles, so every piercing point hits at most two rectangles and therefore  $\tau(\Rcal)\ge |\Rcal|/2$.

We now define the main gadget in our constructions called a {\em package}. It is inspired by the ``filter-bed'' construction of Asplund and Gr\"unbaum~\cite{AsplundGrunbaum1960}.

\begin{definition}[Horizontal packages]\label{def:package}
A \emph{horizontal $(n,k)$-package} consists of a bounding rectangle $\Omega$, called its \emph{container}, a family $\Rcal$ of $n$ rectangles contained in $\Omega$, and $k$ pairwise-disjoint vertical slabs $\Pi_1,\ldots,\Pi_k\subseteq\Omega$, called \emph{ports}.  A rectangle crosses port $\Pi_j$ if it horizontally crosses $\Pi_j$.
For $R\in\Rcal$, let $\sigma(R)=\{j:R\text{ crosses }\Pi_j\}$.  The following five properties are required.
\begin{enumerate}[label=(P\arabic*), start=0]
\item For every $R\in\Rcal$ and every $j\in[k]$, either $R$ crosses $\Pi_j$ or $R\cap\Pi_j=\emptyset$.\label{prop:p0}
\item Every rectangle crosses at least one port.
\item For each port $\Pi_j$, the rectangles crossing $\Pi_j$ are pairwise disjoint.
\item The intersection graph $G(\Rcal)$ is triangle-free.
\item For every $Y\subseteq[k]$, if $\Rcal[Y]=\{R\in\Rcal:\sigma(R)\subseteq Y\}$, then $\alpha(\Rcal[Y])\le |Y|$.
\end{enumerate}
\end{definition}

\noindent
If a rectangle $R$ in the package crosses a port $\Pi$ of the package, we say that $R$ {\em uses} the port $\Pi$. We call the set of ports used by a rectangle its {\em port set}. 

\medskip\noindent
A \emph{vertical $(n,k)$-package} is obtained by a $90^\circ$ clockwise rotation of a horizontal package. In this case the ports are horizontal slabs in the container, and rectangles in the package cross the ports vertically.

\section{A 64-rectangle counterexample}\label{sec:eight-four}
In this section, we construct a relatively small collection of rectangles which violate Wegner's conjecture. We generalize the construction in the next section where we also prove things more formally. The main gadget we use is the horizontal $(8,4)$-package shown in 
Figure~\ref{fig:package-8-4} and its vertical version obtained by a clockwise $90^\circ$ rotation.
The shaded vertical slabs in Figure~\ref{fig:package-8-4} are the four ports. The formal coordinate construction of the package appears as the case $m=2$ of the general construction in Section~\ref{sec:general-package}. This is a simplified version of the ``filter-bed'' construction of Asplund and Gr\"{u}nbaum~\cite{AsplundGrunbaum1960}. 

\begin{figure}[htbp]
\centering
\small
\begin{minipage}[t]{0.48\textwidth}
\centering
\resizebox{\linewidth}{!}{\DrawBasic}
\caption{A horizontal $(8,4)$-package.\\ The intersection graph is $C_5\dotcup K_2\dotcup K_1$.}
\label{fig:package-8-4}
\end{minipage}\hfill
\begin{minipage}[t]{0.48\textwidth}
\centering
\resizebox{\linewidth}{!}{\DrawGridSchematic}
\caption{The $4\times4$ grid composition.\\ Shaded rectangles denote the ports.}
\label{fig:grid-schematic}
\end{minipage}
\end{figure}
It is easy to visually check that properties (P0), (P1), (P2) and (P3) in Definition~\ref{def:package} are satisfied. 
To verify Property (P4), we do a case analysis based on the size of the set $Y$ of {\em allowed} ports. If no port is allowed, i.e., $|Y| = 0$, clearly $|\Rcal[Y]|=0$ as well. If $|Y|=1$, depending on which port is allowed, $\Rcal[Y]$ contains exactly one of $A, B, C$ or $D$ and thus has size $1$. 
Suppose now that $|Y|=2$. If  $Y = \{\Pi_1,\Pi_2\}$ the available rectangles are $A,B,P$, with $A$ intersecting $B$. The case for $Y = \{\Pi_3,\Pi_4\}$ is symmetric. For $Y=\{\Pi_2,\Pi_3\}$ the available rectangles are $B, C$ and $E$. However, $B$ and $C$ are adjacent, and hence the independent set size
is $2$. For the remaining three two-port choices, the set of available rectangles has size $2$ and therefore has independence number at most $2$. 
Suppose now that three ports are allowed i.e., exactly one port is blocked. If $\Pi_1$ or $\Pi_4$ are blocked, the intersection graph of the available rectangles consists of a four-vertex path along with an isolated vertex - which has an independence number of 3. If $\Pi_2$ or $\Pi_3$ are blocked, the intersection graph has four vertices of which two are adjacent - which also has an independence number of $3$. 
Finally, if $|Y| = 4$ i.e., all four ports allowed, all rectangles in the package are available and their intersection graph is $C_5\dotcup K_2\dotcup K_1$, whose independence number is $2+1+1=4$.  Thus the largest independent set using only $t$ of the four ports has size at most $t$.

We now construct a grid using scaled and translated copies of this package.  We take four horizontal copies $H_1,\ldots,H_4$ and four vertical rotated copies $V_1,\ldots,V_4$ and place them so that, for every pair $(i,j) \in \{1, 2, 3, 4\}^2$,  port $j$ of $H_i$ horizontally crosses the container of $V_j$ and port $i$ of $V_j$ vertically crosses the container of $H_i$.  Figure~\ref{fig:grid-schematic} shows the construction. For simplicity, we only show the containers of the packages and not the rectangles inside them.

The resulting family has $4\cdot8+4\cdot8=64$ rectangles. 
Note that for each pair $(i,j)$, all rectangles using port $j$ of $H_i$ intersect all rectangles using port $i$ of $V_j$ and these are the only intersections between rectangles from distinct packages. Property (P2) ensures that the intersection graph of the family of 64 rectangles is triangle-free which means that for this family $\tau \ge 64/2 = 32$. Next, note that for any pair $(i,j)$ any independent set of the rectangles cannot contain a rectangle using port $j$ in $H_i$ as well as a rectangle using port $i$ in $V_j$ as any two such rectangles intersect. Thus, an independent set can either use port $j$ of $H_i$ or port $i$ of $V_j$ but not both. Thus, over all 8 packages, the total number of ports used is at most 16. By property (P4) of packages, for this family of rectangles $\nu \le 16$. Conversely, each horizontal $(8,4)$-package contains the independent set $\{U,P,A,C\}$ of size $4$; taking such a set in each of the four horizontal packages (and no vertical-package rectangles) gives an independent set of size $16$, since distinct horizontal packages have disjoint containers. Hence $\nu=16$.

Thus we have $\tau > 2\nu -1$ contradicting Wegner's conjecture. 
The same instance also has LP value at least $64/2=32$ and integral optimum of $16$, so its integrality gap is 2.
A more formal version of the above intuitive proof appears in 
Theorem~\ref{thm:grid}.

\section{Instances with larger integrality gap}\label{sec:general-package}

We now generalize the idea in Section~\ref{sec:eight-four} so that the integrality gap of the LP approaches $5/2$.  The argument in the grid proof remains unchanged; what changes is the package construction so that the rectangles to port ratio approaches $5/2$. We first prove that this ratio is the integrality gap we obtain from the grid construction.

\begin{theorem}[Grid composition]\label{thm:grid}
Given a horizontal $(n,k)$-package, we can build a triangle-free family $\Fcal$ of $2nk$ rectangles with $\alpha(\Fcal) \le k^2$ and $\tau(\Fcal)\ge nk$. This implies that the integrality gap of the LP for the MISR problem with clique constraints is at least $n/k$.
\end{theorem}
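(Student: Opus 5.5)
We follow the $4\times4$ construction of Section~\ref{sec:eight-four}, now with $k$ copies in each direction. Take $k$ scaled and translated copies $H_1,\ldots,H_k$ of the horizontal $(n,k)$-package and $k$ copies $V_1,\ldots,V_k$ of its $90^\circ$ rotation. Place the containers of the $H_i$ as thin, pairwise disjoint horizontal strips stacked vertically, and the containers of the $V_j$ as thin, pairwise disjoint vertical strips arranged left to right. Use affine maps that scale the two axes independently; these preserve crossing relations, so all package properties survive. Choose the maps so that the port $\Pi_j$ of $H_i$ is a vertical slab whose $x$-projection strictly contains the $x$-projection of the container of $V_j$, and whose $y$-extent is that of the container of $H_i$. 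Symmetrically, the port $i$ of $V_j$ is a horizontal slab whose $y$-projection strictly contains the $y$-projection of the container of $H_i$. We must also make sure that the container of $V_j$ meets the container of $H_i$ only inside both of these ports. This is arranged by making the containers of the $V_j$ narrower than the gaps between ports of $H_i$, and doing the same in the other direction. The family $\Fcal$ consists of all $2nk$ rectangles.

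\textbf{Key step: classify inter-package intersections.} Distinct $H$'s have disjoint containers, and likewise distinct $V$'s, so they contribute no edges. Consider $R\in H_i$ and $S\in V_j$. Their containers meet only in the region $Q_{ij}=\Pi^{H_i}_j\cap\Pi^{V_j}_i$. By (P0), $R$ either avoids $\Pi^{H_i}_j$ or crosses it horizontally. If it crosses, its $x$-projection covers that of the container of $V_j$, hence that of $S$. Symmetrically, $S$ either avoids $\Pi^{V_j}_i$ or its $y$-projection covers that of the container of $H_i$. Therefore $R\cap S\neq\emptyset$ if and only if $R$ uses port $j$ of $H_i$ and $S$ uses port $i$ of $V_j$. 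This step requires the most care about coordinates, but nothing beyond what the figure suggests.

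\textbf{Triangle-freeness.} Suppose there is a triangle. It cannot lie within one package, by (P3). Since there are no $H$–$H$ or $V$–$V$ edges between different packages, it must have two vertices in one package, say $R,R'\in H_i$, and one vertex $S\in V_j$. By the classification, both $R$ and $R'$ use port $j$ of $H_i$, so by (P2) they are disjoint. This is a contradiction; the case with two vertices in one $V_j$ is symmetric.

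\textbf{Bounds.} Let $I$ be an independent set. Let $Y_i$ be the set of ports of $H_i$ used by members of $I\cap H_i$, and $Z_j$ the analogous set for $V_j$. Each member of $I\cap H_i$ has its port set contained in $Y_i$, so (P4) gives $|I\cap H_i|\le|Y_i|$; similarly $|I\cap V_j|\le |Z_j|$. By the classification, $j\in Y_i$ and $i\in Z_j$ cannot both hold, since the two rectangles involved would intersect. Each pair $(i,j)$ therefore contributes at most one to $\sum_i|Y_i|+\sum_j|Z_j|$, which gives $|I|\le k^2$. For the piercing bound, triangle-freeness means each point hits at most two rectangles, so $\tau\ge 2nk/2=nk$. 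Likewise $x\equiv 1/2$ is LP-feasible with value $nk$, and hence the integrality gap is at least $nk/k^2=n/k$.
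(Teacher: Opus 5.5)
Your proof is correct and follows essentially the same route as the paper. It uses the same $k\times k$ grid placement, the same (P0)-based classification of cross-package edges, (P2) and (P3) for triangle-freeness, and (P4) with one-port-per-cell counting for $\alpha(\Fcal)\le k^2$. The remark about making containers ``narrower than the gaps between ports'' is unnecessary and slightly misphrased: the strict containments you already imposed force the containers of $H_i$ and $V_j$ to meet exactly in $\Pi^{H_i}_j\cap\Pi^{V_j}_i$.
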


\begin{proof}
Let $\Pcal$ be the given $(n,k)$-horizontal package and let $\Pcal^\perp$ be the vertical package obtained by a clockwise $90^\circ$ rotation of $\Pcal$.  Place $k$ horizontal copies $H_1,\ldots,H_k$ of $\Pcal$ with disjoint containers stacked vertically, and $k$ vertical copies $V_1,\ldots,V_k$ with disjoint containers placed side by side.  Scale and translate the copies so that, for every $(i,j) \in \{1, 2, \ldots, k\}^2$ , port $j$ of $H_i$ horizontally crosses port $i$ of $V_j$. 
Note that this forces the $x$-range of the container of $V_j$ to be a proper subset of the $x$-range of the port $j$ in $H_i$ and similarly the $y$-range of the container of $H_i$ must be a proper subset of the $y$-range of the port $i$ in $V_j$. By Property~\ref{prop:p0}, since each rectangle either crosses a port or is disjoint from it, this has two implications: i) all rectangles using port $j$ in $H_i$ intersect all rectangles using port $i$ in $V_j$ and ii) these are the only intersections among rectangles belonging to distinct packages - the remaining intersections are between rectangles within a package. 
Let $\Fcal$ denote the set of all rectangles in all the packages. Since there are $2k$ packages with $n$ rectangles each, $|\Fcal| = 2nk$. 
We next observe that $\Fcal$ is triangle-free. To see this note that by property (P3) of packages, the rectangles within a single package do not form a triangle. Since the distinct rectangles belonging to distinct horizontal packages do not intersect and similarly rectangles belonging to distinct vertical packages do not intersect, any triangle in the intersection graph must involve two rectangles from a horizontal package $H_i$ and one from a vertical package $V_j$ or vice-versa (two from $V_j$ and one from $H_i$). Assume that two of the rectangles belong to $H_i$, the other case being analogous. This leads to a contradiction since the only rectangles in $H_i$ that may intersect rectangles in $V_j$ are those that use the port $j$ of $H_i$ and rectangles in a package which use the same port must be pairwise disjoint by property (P2) of packages.

Next, we show that any independent set $I \subseteq \Fcal$ has size at most $k^2$.  For each horizontal package $H_i$, let $Y_i^H$ be the set of ports of $H_i$ used by rectangles of $I\cap H_i$.  By property (P4) of Definition~\ref{def:package}, $|I\cap H_i|\le |Y_i^H|$.  Similarly, for each vertical package $V_j$,  $|I\cap V_j|\le |Y_j^V|$ where $Y_j^V$ denotes the set of ports of $V_j$ used by rectangles of $I\cap V_j$.  For every pair $(i,j) \in \{1, 2, \ldots, k\}^2$, we cannot have both $j\in Y_i^H$ and $i\in Y_j^V$, any rectangle in $H_i$ using port $j$ intersects any rectangle in $V_j$ using port $i$.  
Thus, the total number of used ports over all packages is at most the number of cells: $\sum_i |Y_i^H|+\sum_j |Y_j^V|\le k^2$.  Consequently $|I|\le \sum_i |Y_i^H|+\sum_j |Y_j^V|\le k^2$.

Since $\Fcal$ is a triangle-free family, any point in the plane  pierces at most two rectangles, implying that $\tau(\Fcal)\ge |\Fcal|/2=nk$.  Also the uniform vector $x_R=1/2$ is feasible for the LP with clique constraints and has value $nk$, while the integral optimum, as argued above, is at most $k^2$.  The integrality gap is therefore at least $nk/k^2=n/k$.
\end{proof}

\noindent{\bf Package construction.} We now construct $(n,k)$-packages with $n/k$ approaching $5/2$ for large $k$.  The idea is to use disjoint vertically separated copies of five-cycles.  Each $C_5$ contributes five rectangles but only two units of ``independent-set capacity''.  The horizontal extents of the rectangles within each five-cycle are chosen carefully so that, even after many cycles are stacked together, the package still satisfies the port-restricted bound (P4).  All other package properties are nearly immediate from the construction.

Fix an integer $m\ge2$ and set $k=2m$.  The package has $k$ ports and $n=5m-2$ rectangles.  Its container is $\Omega_m=[0, 8m+3]\times[0,5m]$, and its ports are $\Pi_j=[4j-1,4j]\times[0,5m]$ for $j=1,\ldots,2m$.
For a nonempty consecutive interval $[a,b]\subseteq[2m]$, define $X[a,b]=[4a-3,4b+2]$.  A rectangle with horizontal span $X[a,b]$ crosses exactly the ports $\Pi_a,\Pi_{a+1},\ldots,\Pi_b$ and is separated by a positive gap from every other port. 
The package contains the following rectangles: three initial rectangles $U, P,Q$ along with five rectangles $A_i,B_i,C_i,D_i,E_i$ for each $i=1,\ldots,m-1$. The $x$- and $y$-ranges of the rectangles are shown in the table below. 

\begin{center}
\begin{tabular}{ccc}
\toprule
rectangle & $x$-range & $y$-range\\
\midrule
$U$ & $X[1,2m]$ & $[1,2]$\\
$P$ & $X[1,m]$ & $[3,4]$\\
$Q$ & $X[m+1,2m]$ & $[3,4]$\\
\midrule
$A_i$ & $X[i,i]$ & $[5i+1,5i+4]$\\
$B_i$ & $X[i+1,m]$ & $[5i+3,5i+4]$\\
$C_i$ & $X[m+1,m+i]$ & $[5i+3,5i+4]$\\
$D_i$ & $X[m+i+1,2m]$ & $[5i+1,5i+4]$\\
$E_i$ & $X[i+1,m+i]$ & $[5i+1,5i+2]$\\
\bottomrule
\end{tabular}
\end{center}
Note that $n=5(m-1) + 3$, $k = 2m$ and $n/k = 5/2 - 2/k$. The intersection graph of the rectangles has the following connected components each placed in a disjoint horizontal band: the isolated vertex $U$, the edge $P - Q$ and a $5$-cycle $A_i-B_i-C_i-D_i-E_i-A_i$ for each $i=1, \ldots, m-1$. This can be easily verified from the coordinates in the above table.  Figure~\ref{fig:package-18-8} shows the constructions for $m=4$. 

\begin{proposition}\label{prop:package-general}
For every $m\ge2$, the construction above is a horizontal $(5m-2,2m)$-package.  
\end{proposition}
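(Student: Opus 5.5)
The plan is to dispatch (P0)--(P3) directly from the coordinate table, and to reduce (P4) to a statement about consecutive port intervals that can be checked by counting.

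\textbf{Properties (P0)--(P3).} For (P0) and (P1), every rectangle has $x$-range $X[a,b]=[4a-3,4b+2]$, and its $y$-range lies in $(0,5m)$. So it horizontally crosses $\Pi_a,\dots,\Pi_b$: it strictly contains $[4j-1,4j]$ in $x$ and is strictly contained in $[0,5m]$ in $y$. It is disjoint from every other port, since $4b+2<4(b+1)-1$ and $4a-3>4(a-1)$. Each such interval is nonempty, which gives (P1).

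For (P2) and (P3), I first verify the claimed component structure: $U$ is isolated, $P$--$Q$ is an edge, and $A_i B_i C_i D_i E_i$ is a $5$-cycle. Rectangles with different $i$ lie in disjoint horizontal bands $[5i+1,5i+4]$, and $U,P,Q$ lie below all of these. Inside band $i$, the rows $[5i+3,5i+4]$ and $[5i+1,5i+2]$ are disjoint. The cycle edges and non-edges then follow from overlaps of the $x$-ranges $X[\cdot,\cdot]$: consecutive intervals such as $X[i,i]$ and $X[i+1,m]$ share $[4i+1,4i+2]$, and intervals separated by a gap of at least one index are disjoint.

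Triangle-freeness (P3) is then immediate. For (P2), note that within a component the two rectangles of each edge have disjoint port sets: $\sigma(P)\cap\sigma(Q)=\emptyset$, and along the cycle $\{i\},\{i+1..m\},\{m+1..m+i\},\{m+i+1..2m\},\{i+1..m+i\}$ adjacent sets are disjoint. Hence no two intersecting rectangles share a port.

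\textbf{Reducing (P4) to intervals.} Let $I$ be an independent set in $\Rcal[Y]$. I want an injection $I\to Y$ sending each $R$ to a port in $\sigma(R)$; since $\sigma(R)\subseteq Y$, this gives $|I|\le|Y|$. By Hall's theorem, such an injection exists if $|\bigcup_{R\in S}\sigma(R)|\ge|S|$ for every $S\subseteq I$. Because every $\sigma(R)$ is an interval, a union of such sets splits into disjoint maximal intervals, and each $R\in S$ lies inside exactly one of them. So it suffices to show that for every interval $[a,b]\subseteq[2m]$,
\[
\alpha(\Rcal[[a,b]])\le b-a+1 .
\]

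\textbf{Case analysis on $[a,b]$.}
\begin{itemize}
\item If $b<m$, only the $A_i$ with $a\le i\le b$ are available, so the bound holds trivially.
\item If $b=m$, the available rectangles are $A_i$ for $a\le i\le m-1$, $B_i$ for $a-1\le i\le m-1$, and $P$ if $a=1$. Since $A_i$ and $B_i$ are adjacent, each cycle $i\ge a$ contributes at most one rectangle. The extra $B_{a-1}$ (when $a\ge2$) or $P$ (when $a=1$) adds one more, giving at most $m-a+1$.
\item Intervals with $a\ge m+1$ are handled by the left--right mirror symmetry $j\mapsto 2m+1-j$ of the construction. This symmetry maps $A_i\leftrightarrow D_{m-i}$, $B_i\leftrightarrow C_{m-i}$, and $P\leftrightarrow Q$.
\end{itemize}

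\textbf{The main obstacle: straddling intervals $a\le m<b$.} Write $r=b-m$. For cycle $i$, the available rectangles are:
\begin{itemize}
\item $A_i$ iff $i\ge a$;
\item $B_i$ iff $i\ge a-1$;
\item $C_i$ iff $i\le r$;
\item $D_i$ iff $b=2m$;
\item $E_i$ iff $a-1\le i\le r$.
\end{itemize}
I would bound the contribution of each cycle by
\[
\mathbf 1[i\ge a]+\mathbf 1[i\le r-1]
\]
plus boundary corrections at $i=a-1$ and $i=r$. These corrections are absorbed by checking which of $U,P,Q$ are then unavailable: $P$ needs $a=1$, $Q$ needs $b=2m$, and $U$ needs both. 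Summing gives roughly $(m-a)+(r-1)+O(1)\le b-a+1$.

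The delicate bookkeeping is exactly at the endpoint cycles $i=a-1$ and $i=r$, and in the regime $b=2m$ where the $D_i$ are also available. There I would enumerate the maximum independent sets of the available sub-path of each $C_5$, using the pairs $\{A,C\},\{A,D\},\{B,D\},\{B,E\},\{C,E\}$. This is the step I expect to need the most care.
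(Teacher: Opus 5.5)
Your (P0)--(P3) arguments, the reduction of (P4) to intervals, and the cases $b\le m$ are correct. Hall's theorem is not even needed there: take $S=I$ in your union bound.

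\textbf{The case $a\ge m+1$.} This case rests on a symmetry the construction does not have. Under $j\mapsto 2m+1-j$ the correspondences $B_i\leftrightarrow C_{m-i}$, $E_i\leftrightarrow E_{m-i}$ and $P\leftrightarrow Q$ do respect port sets. However, $\sigma(A_i)=\{i\}$ is sent to $\{2m+1-i\}$, whereas $\sigma(D_{m-i})=\{2m-i+1,\ldots,2m\}$, and these coincide only for $i=1$. Already for $m=3$, no rectangle has port set $\{5\}$ (the image of $\sigma(A_2)$) or $\{1,2\}$ (the image of $\sigma(D_1)=\{5,6\}$). The package is mirror-symmetric only when $m=2$.

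This case therefore needs its own argument, but it is easy. If $[a,b]\subseteq[m+1,2m]$, the only candidates are:
\begin{itemize}
\item $C_i$, which needs $a=m+1$ and $i\le b-m$;
\item $D_i$, which needs $b=2m$ and $i\ge a-m-1$;
\item $Q$, which needs $[a,b]=[m+1,2m]$.
\end{itemize}
This gives at most $2m-a+1$ rectangles (all $D_i$'s) when $a\ge m+2$, and at most $b-m$ (all $C_i$'s) when $a=m+1\le b<2m$. For $[m+1,2m]$, the edges $C_iD_i$ together with $Q$ give at most $m$.

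\textbf{The straddling case $a\le m<b$.} The more serious gap is here, and this case is the real content of (P4). You leave it at ``roughly $(m-a)+(r-1)+O(1)$''. Since the bound $b-a+1$ is attained, an unspecified $O(1)$ proves nothing. You must show that the endpoint corrections plus the contribution of $\{U,P,Q\}$ total at most exactly $2$. The computation you defer is short, and it is exactly what the paper does.

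If $b<2m$, no $D_i$ is available. The adjacent pair $A_i,B_i$ needs $i\ge a-1$, $C_i$ needs $i\le b-m$, and $E_i$ needs both; in that last case the available part lies in the path $E_iA_iB_iC_i$. So cycle $i$ contributes at most $\mathbf{1}[i\ge a-1]+\mathbf{1}[i\le b-m]$. Summing, and adding $P$ when $a=1$, gives exactly $b-a+1$.

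If $b=2m$, every cycle contains the edge $C_iD_i$, and $A_i,B_i,E_i$ all need $i\ge a-1$. So cycle $i$ contributes at most $1+\mathbf{1}[i\ge a-1]$. Adding $\alpha(\{U,P,Q\})=2$ when $a=1$, or $1$ (only $Q$) when $a\ge 2$, gives $2m-a+1$.

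Your shifted indicators $\mathbf{1}[i\ge a]+\mathbf{1}[i\le r-1]$, plus one unit for $i=a-1$ and one for $i=r$, are equivalent to this. The trade-off against the availability of $P$, $Q$ and $U$ that you describe is also right. But as written, the per-cycle independence numbers are never actually verified, so (P4) is not yet proved.
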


\begin{proof}
Every rectangle with an $x$-range $X[a,b]$ crosses exactly $\Pi_a,\ldots, \Pi_b$ and is disjoint from the other ports; this gives \ref{prop:p0}.
Every listed port interval is nonempty, so every rectangle crosses at least one port satisfying property (P1).
Property (P3) is satisfied since the intersection graph is $(m-1)C_5\dotcup K_2\dotcup K_1$ which is triangle-free. 
Property (P2) is also easy to verify. 
Fix a port $\Pi_j$.  Rectangles belonging to different $5$-cycles lie in vertically disjoint bands.
Also $U$ and $P, Q$ lie in bands of their own disjoint from others. 
Inside one $5$-cycle, the only pairs of rectangles sharing a port are $(B_i,E_i)$ and $(C_i,E_i)$, and both pairs are vertically disjoint. Thus all rectangles crossing any fixed port are pairwise disjoint. 
It remains to check the property (P4). First take a consecutive block of ports $J=[r,s]\subseteq[2m]$, and let $\Rcal(J)$ be the rectangles whose port set is contained in $J$.
If $s<m$, only singleton rectangles $A_i$ can lie in $J$, so $\alpha(\Rcal(J))\le |J|$.

Assume $m\le s< 2m$.  If $r>m+1$, $J$ does not contain the port set of any rectangle.  Otherwise, the part of the $i$-{th} cycle contained in $J$ has independence number at most $\1_{\{i\ge r-1\}}+\1_{\{i\le s-m\}}$.  The first term accounts for the possible left side $(A_i,B_i)$, and the second for the possible middle-right side $(C_i,E_i)$; no $D_i$ is present because $s<2m$.  Summing over $i$ and adding the possible rectangle $P$ when $r=1$, we get $\alpha(\Rcal(J))\le (m-1)+(s-m)+1=s$ if $r=1$, and $\alpha(\Rcal(J))\le (m-r+1)+(s-m)=s-r+1$ if $2\le r\le m+1$.  Thus $\alpha(\Rcal(J))\le |J|$.

Finally let $s=2m$.  If $r>m+1$, only suffix rectangles $D_i$ can appear, so there are at most $2m-r+1=|J|$ of them.  If $r\le m+1$, then the $i^{th}$ cycle contributes at most $1+\1_{\{i\ge r-1\}}$: one unit for the right side $(C_i,D_i)$, and a second unit only when the left side $(A_i,B_i,E_i)$ is also available.  The rectangles in $\{U, P, Q\}$ contribute 2 when $r=1$ since the independence number of $\{U, P, Q\}$ is 2, and contribute 1 when $2\le r\le m+1$, because only $Q$ is available.  Hence $\alpha(\Rcal(J))\le (m-1)+(m-1)+2=2m$ if $r=1$, and $\alpha(\Rcal(J))\le (m-1)+(m-r+1)+1=2m-r+1$ if $2\le r\le m+1$.  Again $\alpha(\Rcal(J))\le |J|$.

For an arbitrary set of ports $Y\subseteq[2m]$, decompose $Y$ into maximal consecutive blocks $J_1,\ldots,J_q$.  Any rectangle whose port set is contained in $Y$ has port set contained in one of these blocks.  Rectangles assigned to different blocks have disjoint $x$-projections, because a missing port separates the blocks.  Therefore the independence numbers add, and $\alpha(\Rcal[Y])=\sum_h\alpha(\Rcal(J_h))\le \sum_h |J_h|=|Y|$.  This proves (P4).
\end{proof}

\begin{figure}[htbp]
\centering
\small
\resizebox{0.70\textwidth}{!}{\DrawPackage{4}}
\caption{The case $m=4$ of the general construction: the horizontal $(18,8)$-package.}
\label{fig:package-18-8}
\end{figure}

Combining Theorem~\ref{thm:grid} with Proposition~\ref{prop:package-general}, and using $n=5m-2$ and $k=2m$, we get an integrality gap of $n/k=5/2-1/m$.  This proves Theorem~\ref{thm:main-gap}.

\mainGapTheorem*

\begin{corollary}
For every $\delta>0$, 
there is a triangle-free family of rectangles so that 
the integrality gap of the standard 
LP-relaxation for the vertex cover problem is at least $\frac{8}{5} - \delta$.    
\end{corollary}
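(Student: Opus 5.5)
We use complementation: for any graph $G$ on $N$ vertices, vertex covers are exactly complements of independent sets, so $\operatorname{VC}(G)=N-\alpha(G)$. We also use the standard vertex cover LP: minimize $\sum_v y_v$ subject to $y_u+y_v\ge 1$ for every edge $uv$ and $y\ge 0$. We plan to apply this to the triangle-free families $\Fcal$ of Theorem~\ref{thm:grid} built from the $(5m-2,2m)$-packages of Proposition~\ref{prop:package-general}.

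First we record the parameters. For $n=5m-2$ and $k=2m$ we have $N=|\Fcal|=2nk$ and $\alpha(\Fcal)\le k^2$. So every vertex cover has size at least
\[
2nk-k^2=k(2n-k)=2m(10m-4-2m)=2m(8m-4).
\]

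Second, the LP upper bound. The all-$1/2$ vector is feasible for the edge-covering LP, so $\LP_{VC}\le N/2=nk=2m(5m-2)$.

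The gap is therefore at least
\[
\frac{2m(8m-4)}{2m(5m-2)}=\frac{8m-4}{5m-2}=\frac{4(2m-1)}{5m-2}.
\]
This ratio tends to $8/5$ as $m\to\infty$. Choosing $m$ large enough that $\frac{8m-4}{5m-2}\ge \frac85-\delta$ gives the claim. The difference is $\frac85-\frac{8m-4}{5m-2}=\frac{4}{5(5m-2)}$, so any $m$ with $5m-2\ge 4/(5\delta)$ works.

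We do not expect a real obstacle here, only a remark. If "the standard LP-relaxation" means the clique-constrained covering version, i.e.\ $\sum_{v\in C}y_v\ge |C|-1$ for every clique $C$, then on a triangle-free family the only cliques of size two are edges. The constraints are then identical to the edge constraints, so the same $1/2$ vector is feasible and the bound $nk$ still holds. The only substantive ingredient is $\alpha(\Fcal)\le k^2$ from Theorem~\ref{thm:grid}, which we take as given.
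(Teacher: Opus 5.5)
Your proposal is correct and follows the paper's argument: you use vertex cover $= |\Fcal|-\alpha(\Fcal)$ with $\alpha(\Fcal)\le k^2$ from the grid construction, bound the LP by the all-$\tfrac12$ vector, and let $m\to\infty$. The only difference is presentation: you make the parameters explicit and get the exact ratio $(8m-4)/(5m-2)$, whereas the paper does the same computation in terms of $\varepsilon$.
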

\begin{proof}
The LP-relaxation of the vertex cover problem for any family $\Rcal$ of rectangles is: \[\min_{\{x_R: R\in\Rcal\}}\left\{\sum_{R\in\Rcal} x_R: x_A + x_{B}\ge 1,\forall \{A, B\}\in G(\Rcal), x_R\ge 0\right\}.\]
Setting $x_R = 1/2$ for all $R\in\Rcal$ yields a feasible solution with objective value $|\Rcal|/2$.
The proof of Theorem~\ref{thm:main-gap} constructs a triangle-free family of $n$ rectangles such that the maximum independent set has size at most $\frac{n}{2(2.5 - \varepsilon)} = \frac{n}{5 - 2\varepsilon}$
for any $\varepsilon>0$. In the same family the vertex cover therefore has size at least $\OPT := n - \frac{n}{5 - 2\varepsilon} = \frac{4 - 2\varepsilon}{5 - 2\varepsilon}n$, while the optimal solution to the vertex cover LP is at most $n/2$, since $x_R=\tfrac12$ for all $R$ is feasible.
The integrality gap is therefore  $\left(\frac{4 - 2\varepsilon}{5 - 2\varepsilon}n \right) / \left( n/2 \right) = \frac{8-4\varepsilon}{5 - 2\varepsilon} \ge  \frac{8}{5} - \delta$ for sufficiently small $\varepsilon$. 
\end{proof}

We next show that the integrality gap of triangle-free rectangle families cannot be more than $3$. 
In fact, we show that this upper bound holds even for the \emph{weighted problem} of computing a maximum weight independent set.
Thus to obtain significantly larger integrality gap, 
one needs to explore families with larger max-clique sizes. For a family of rectangles $\Rcal$, 
let $\OPT^w_{\LP}(\Rcal)$ denote the optimal solution value of the
LP relaxation~\eqref{eq:lp} for the maximum independent set problem on $\Rcal$.

\lpRelativeThreeTheorem*
\begin{proof}
Let \(x^*\) be an optimal solution to the \emph{weighted clique-LP}, that is obtained from LP-relaxation~(\ref{eq:lp}) by replacing the objective
function by $\sum_{R\in\mathcal{R}} w_Rx_R$. Then, 
\[
    \operatorname{OPT}_{LP}^w=\sum_{R\in\mathcal R} w_R x_R^* .
\]
Compute a \(6\)-coloring \(\mathcal R_1,\ldots,\mathcal R_6\) of the triangle-free
rectangle intersection graph - the proof of Asplund and Gr\"{u}nbaum that triangle-free rectangle intersection
graphs are $6$-colorable directly yields a polynomial time algorithm~\cite{AsplundGrunbaum1960}.  For each
color class, define
\[
    \mu_i=\sum_{R\in\mathcal R_i} w_R x_R^* .
\]
Since each color appears in exactly five of the \(\binom 62=15\) pairs of colors,
there is a pair \(i<j\) such that
\[
    \mu_i+\mu_j
    \ge
    \frac{1}{15}\sum_{i<j}(\mu_i+\mu_j)
    =
    \frac{5}{15}\sum_{i=1}^6 \mu_i
    =
    \frac{\operatorname{OPT}_{LP}^w}{3}.
\]
Let \(\mathcal R'=\mathcal R_i\cup \mathcal R_j\).  The graph \(G[\mathcal R']\)
is bipartite.  Consider the weighted stable-set LP on \(G[\mathcal R']\):
\[
    \max \sum_{R\in\mathcal R'} w_R y_R
    \quad\text{s.t.}\quad
    y_R+y_S\le 1 \text{ for every edge } RS\in E(G[\mathcal R']),
    \qquad 0\le y_R\le 1.
\]
The restriction of \(x^*\) to \(\mathcal R'\) is feasible and has value
\(\mu_i+\mu_j\).  Since \(G[\mathcal R']\) is bipartite, this LP is integral.
Therefore it has an integral optimum, which is an independent set of weight at
least \(\frac{1}{3}\operatorname{OPT}_{LP}^w\).
\end{proof}

\section{Integrality gap for axis-parallel segments}
\label{sec:segments}

We now give an alternate construction which yields the same integrality-gap lower bound for axis-parallel segments as in~\cite{caoduroEtAl2023}. It is interesting to note that the construction is similar to that in \cite{caoduroEtAl2023} but arrived at from a different perspective. 
By Theorem~\ref{thm:grid}, it suffices to construct a package of horizontal segments whose segments to ports ratio tends to $2$.
Fix $k\ge2$.  For each $i=1,\ldots,k-1$, take two horizontal segments
$L_i$ and $R_i$ with
$\sigma(L_i)=\{1,\ldots,i\}$ and $\sigma(R_i)=\{i+1,\ldots,k\}$.
Place $L_i$ and $R_i$ on the line $y = i$ so that they intersect in the gap between the ports 
$\Pi_i$ and $\Pi_{i+1}$. 
Finally add one more segment $U$ on the line $y=0$ with $\sigma(U)=\{1,\ldots,k\}$.
Thus the package has $2k-1$ segments.  Figure~\ref{fig:segment-package-k5}
shows the construction for $k=5$.

\begin{figure}[!htbp]
\centering
\begin{tikzpicture}[x=2.15cm,y=.62cm,line cap=round,font=\small]
  \draw[bound] (0,.58) rectangle (6.05,5.42);
  \foreach \j in {1,...,5} {
    \draw[fill=gray!15,draw=gray!60] (\j-.07,.58) rectangle (\j+.07,5.42);
    \node[above=1pt] at (\j,5.42) {$\Pi_{\j}$};
  }
  \draw[ultra thick,black!70] (.48,1) -- (5.52,1);
  \node[left] at (.48,1) {$U$};
  \foreach \i in {1,...,4} {
    \draw[ultra thick,blue!70!black] (.48,{\i+1}) -- (\i+.55,{\i+1});
    \draw[ultra thick,orange!80!black] (\i+.45,{\i+1}) -- (5.52,{\i+1});
    \node[left] at (.48,{\i+1}) {$L_{\i}$};
    \node[right] at (5.52,{\i+1}) {$R_{\i}$};
  }
  \draw[bound] (0,.58) rectangle (6.05,5.42);
\end{tikzpicture}
\vspace{0.5ex}
\caption{A $(9,5)$-package formed by horizontal segments.}
\label{fig:segment-package-k5}
\end{figure}
  
Properties (P0), (P1), (P2) and (P3) of packages hold by construction and are easy to check. 
Property (P4) holds trivially when $Y = \emptyset$. It is also easy to verify when $Y$ contains all $k$ ports: any  independent set may contain $U$ and at most one segment from each pair $(L_i, R_i)$ for $i = 1, \ldots, k-1$  and thus has size at most $k$. 
Assume now that $Y$ is non-empty but does not contain all ports. In this case $U \notin \Rcal[Y]$.
Let $r\ge 0$ be the largest number so that $Y$ contains the first $r$ ports and similarly let $s\ge 0$ be the largest number so that $Y$ contains the last $s$ ports. Since we assumed that $Y$ does not contain all ports, $r+s < k$ and therefore $|Y| \ge r + s$. In this case, note that $\Rcal[Y]$ contains exactly the first $r$ segments in $L_1, \ldots, L_{k-1}$ and the last $s$ segments in $R_1, \ldots, R_{k-1}$ and therefore has size exactly $r+s$. This implies that property (P4) holds: $\alpha(\Rcal[Y]) \le |\Rcal[Y]| = r+s \le |Y|$. 

Applying the grid composition of Theorem~\ref{thm:grid} to this package yields the following. 
\segmentGapTheorem*

For comparison, the family obtained in~\cite{caoduroEtAl2023} has size $4k^2$ with integrality gap $\frac{2k^2}{k^2 + 3k - 2} = 2 - 6/k + O(1/k^2)$. So, for the same integrality gap, our construction is slightly smaller. 

\section{Smaller triangle-free family with chromatic number 6}
\label{sec:coloring}

Asplund and Gr\"{u}nbaum~\cite{AsplundGrunbaum1960} proved that the chromatic number of any triangle-free family of axis-parallel rectangles is at most $6$. 
They also showed that this is tight by constructing a triangle-free family of $12 \cdot (8^{8+1} - 1)/(8-1) \approx 2.3 \times 10^8$ rectangles whose proper coloring requires 6 colors. 
The main gadget they used is a ``filter-bed'' which is simply a package (as defined in Definition~\ref{def:package}) with the property (P4) replaced by the following {\bf\em coloring property} (P4'): for any proper $5$-coloring of rectangles in the package, the rectangles using one of the ports have at least 3 different colors.
Asplund and Gr\"{u}nbaum~\cite{AsplundGrunbaum1960} use the filter-bed to construct a family of rectangles, with a combination of a hierarchical tree-like structure and a grid structure to force any proper coloring of the family to use at least $6$ colors.
Asplund and Gr\"{u}nbaum remark that they found another construction which was ``quite complicated'' but has only about 50000 rectangles. They however did not describe the construction. 
We improve on their bound by simply replacing the filter-bed in their construction by the new filter-bed shown in Figure~\ref{fig:filter-12-4}. This immediately reduces the number of rectangles in the construction to $12 \cdot (4^{4+1}-1)/(4-1) = 4092$. It only remains to prove that the filter-bed in Figure~\ref{fig:filter-12-4} satisfies the required properties. 

\begin{figure}[!htbp]
\centering
\small
\resizebox{0.62\textwidth}{!}{%
\begin{tikzpicture}[
  x=0.42cm,y=0.16cm,
  filterport/.style={draw=black!65,fill=black!25,fill opacity=0.28,line width=0.5pt},
  filterrect/.style={draw=black,fill=blue!35,fill opacity=0.32,line width=0.55pt},
  filterlabel/.style={font=\scriptsize,inner sep=0.6pt,text opacity=1}
]
  \foreach \j/\x in {1/3.5,2/7.5,3/11.5,4/15.5}{%
    \draw[filterport] ({\x-0.5},0) rectangle ({\x+0.5},30);
    \node[above,font=\scriptsize,inner sep=1pt] at (\x,30) {$\Pi_{\j}$};
  }
  \draw[black!45,line width=0.5pt] (0.5,0) rectangle (18.5,30);

  \draw[filterrect,fill=blue!45] (1,5) rectangle (6,11);
  \draw[filterrect,fill=orange!55] (1,13) rectangle (10,16);
  \draw[filterrect,fill=green!45] (1,1) rectangle (14,3);
  \draw[filterrect,fill=gray!45] (1,27) rectangle (18,29);
  \draw[filterrect,fill=red!45] (5,9) rectangle (10,12);
  \draw[filterrect,fill=teal!35] (5,5) rectangle (14,7);
  \draw[filterrect,fill=purple!45] (5,21) rectangle (18,23);
  \draw[filterrect,fill=blue!25] (9,11) rectangle (14,14);
  \draw[filterrect,fill=orange!35] (9,15) rectangle (18,19);
  \draw[filterrect,fill=green!30] (13,1) rectangle (18,13);
  \draw[filterrect,fill=purple!25] (1,17) rectangle (6,23);
  \draw[filterrect,fill=red!25] (5,17) rectangle (10,19);

  \foreach \name/\x/\y in {
    A/3.5/8,
    B/5.5/14.5,
    C/7.5/2,
    T/9.5/28,
    E/7.5/10.5,
    F/9.5/6,
    G/11.5/22,
    H/11.5/12.5,
    I/13.5/17,
    J/15.5/7,
    K/3.5/20,
    L/7.5/18
  }{
    \node[filterlabel] at (\x,\y) {$\name$};
  }
\end{tikzpicture}%
}
\caption{Modified filter-bed with 12 rectangles and 4 ports.}
\label{fig:filter-12-4}
\end{figure}

\begin{proposition}
The gadget shown in Figure~\ref{fig:filter-12-4} satisfies properties (P0), (P1), (P2), (P3) of packages as well as the coloring property (P4'). 
\end{proposition}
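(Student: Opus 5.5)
The plan is to work from explicit coordinates read off Figure~\ref{fig:filter-12-4}. The ports are $\Pi_1=[3,4]$, $\Pi_2=[7,8]$, $\Pi_3=[11,12]$, $\Pi_4=[15,16]$ (times the full height), and every rectangle has $x$-range of the form $[4a-3,4b+2]$. By the same remark as in the general package construction, such a rectangle crosses exactly $\Pi_a,\dots,\Pi_b$ and misses the other ports, which gives (P0) and (P1) at once. Reading the figure, the port sets are
\[
\sigma(A)=\sigma(K)=\{1\},\quad \sigma(E)=\sigma(L)=\{2\},\quad \sigma(H)=\{3\},\quad \sigma(J)=\{4\},
\]
\[
\sigma(B)=\{1,2\},\quad \sigma(F)=\{2,3\},\quad \sigma(I)=\{3,4\},\quad \sigma(C)=\{1,2,3\},\quad \sigma(G)=\{2,3,4\},\quad \sigma(T)=\{1,2,3,4\}.
\]

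Next I will list the intersection graph from the $y$-ranges. Comparing all pairs with overlapping $x$-ranges, I expect exactly the edges
\[
AE,\ AF,\ EH,\ BH,\ BI,\ HJ,\ FJ,\ CJ,\ IL,\ KL,\ KG,
\]
with $T$ isolated. Each claimed edge is checked by a common overlap, and each non-edge by a gap of at least one unit in $y$.

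\begin{itemize}
\item For (P3), I will check that no two neighbours of any vertex are adjacent. This is a finite check; for instance $N(J)=\{H,F,C\}$ is independent.
\item For (P2), I will check, for each port, that the rectangles crossing it form an independent set. For example, $\Pi_2$ is crossed by $B,C,T,E,F,G,L$, and no edge of the list joins two of these; the other three ports are handled the same way.
\end{itemize}

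The substantive part is (P4'). Suppose, for contradiction, that a proper $5$-colouring exists in which every port sees at most two colours. The rectangle $T$ crosses all four ports, so if its colour is $1$, each port $\Pi_j$ sees colours within $\{1,a_j\}$ for some $a_j$. The key structure is the $5$-cycle
\[
A - E - H - J - F - A,
\]
whose vertices cross ports $1,2,3,4$ and $\{2,3\}$ respectively. Any proper colouring of this cycle uses at least three colours, and at most one cycle vertex in each adjacent pair can have colour $1$.

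I will then do a case analysis on which vertices of the cycle receive colour $1$; at most two can, since they must be non-adjacent. In each case I want to show that the remaining cycle vertices force two different non-$1$ colours into a single port. The tool is the shared rectangles:

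\begin{itemize}
\item $B$ and $C$ lie in $\Pi_1\cap\Pi_2$;
\item $C$, $F$ and $G$ lie in $\Pi_2\cap\Pi_3$;
\item $G$ and $I$ lie in $\Pi_3\cap\Pi_4$.
\end{itemize}

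A shared rectangle that is not coloured $1$ forces $a_j=a_{j'}$ for the two ports it crosses. I will also use the adjacencies $BH$, $BI$, $CJ$, $IL$, $KL$, $KG$ to push the shared rectangles away from colour $1$. For example, $C\sim J$, $B\sim H$ and $F\sim J$ constrain which of them can be $1$, and whenever the shared rectangles are not coloured $1$ we get $a_1=a_2=a_3=a_4$. In that situation all of $A,E,H,J,F$ are coloured from $\{1,a\}$, which 2-colours an odd cycle, a contradiction.

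I expect the main obstacle to be the cases where the shared rectangles themselves take colour $1$, which decouples $a_j$ from $a_{j'}$. There I would try to use the pendant structure: $K$ and $L$ link ports $1$ and $2$ via $KL$ together with $K\sim G$, and $I\sim L$ links ports $3$ and $4$ to port $2$. The aim is to derive a second odd-cycle or two-colour conflict, enumerating the few colour-$1$ patterns (independent sets containing shared vertices) by hand.
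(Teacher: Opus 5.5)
Your check of (P0)--(P3) is correct: the port sets and the edge list $AE,AF,EH,BH,BI,HJ,FJ,CJ,IL,KL,KG$ with $T$ isolated are exactly right. Your setup for (P4') is also right: the rectangles crossing $\Pi_j$ are coloured from $\{1,a_j\}$, and a non-$1$ rectangle crossing $\Pi_j$ and $\Pi_{j'}$ forces $a_j=a_{j'}$.

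The gap is that (P4') is proved only in the easy case $a_1=a_2=a_3=a_4$. Every case with some $a_j\ne a_{j+1}$ is left as ``the aim'', and those cases are the whole content of the argument. The mechanism you propose for them cannot close them. Take the colour-$1$ pattern $\{A,H\}$ on the cycle, with $a_1=a_2=a_3=p\ne r=a_4$. Then $E=F=p$, $J=r$ is a proper, port-consistent colouring of $A-E-H-J-F-A$, and no port sees two distinct non-$1$ colours. The contradiction appears only off the cycle: $a_3\ne a_4$ forces $G=I=1$, hence $L=a_2$, hence $K=1$ (since $a_1=a_2$), which clashes with $G$ on the edge $KG$. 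There is also no ``second odd cycle'' to find. The graph minus $T$ is connected with $11$ vertices and $11$ edges, so $A-E-H-J-F-A$ is its only cycle.

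The missing idea is to split on the two equalities $a_1=a_2$ and $a_3=a_4$, rather than on the colour-$1$ pattern of the cycle. Note that $a_1\ne a_2$ forces $B=C=1$, and $a_3\ne a_4$ forces $G=I=1$.
\begin{enumerate}
\item Both equalities cannot fail, because $BI$ is an edge.
\item Both cannot hold. If also $a_2=a_3$, the $5$-cycle is $2$-coloured. Otherwise $F=1$, so $A=a_1$ and $J=a_4$, whence $E=H=1$, contradicting the edge $EH$.
\item If only $a_1=a_2$ holds, then $G=I=1$, and the chain $I=1\Rightarrow L=a_2\Rightarrow K=1$ contradicts the edge $KG$.
\item If only $a_3=a_4$ holds, then $B=1$ gives $H=a_3$, and $C=1$ gives $J=a_4=a_3$, contradicting the edge $HJ$.
\end{enumerate}
This is the paper's argument. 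Your last paragraph names the right rectangles but supplies neither this case split nor the forcing chains.
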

\begin{proof}
Properties (P0), (P1), (P2) and (P3) are immediate from visual inspection. 
It remains to prove the coloring property. 

Suppose, for a contradiction, that there is a proper $5$-coloring in which every port sees at most two colors. Since $T$ uses all four ports, the color of $T$ appears on every port; call this color $0$. This implies that for each port $j$, the colors seen by the port are contained in $S_j := \{0,s_j\}$ for some $s_j\ne 0$. 
If a rectangle uses the interval of ports $[a,b]$, then its color lies in $S_a\cap\ldots\cap S_b$. Thus such a rectangle has color $0$, unless $s_a=s_{a+1}=\ldots=s_b$, in which case the only possible nonzero color is this common value.

Let $P$ denote the event $s_1=s_2$, and let $Q$ denote the event $s_3=s_4$. We first show that $P$ and $Q$ cannot both hold. Consider the $5$-cycle $A-E-H-J-F-A$, and write $p=s_1=s_2$ and $q=s_3=s_4$. If $p=q$, then all vertices of this odd cycle have colors in $\{0,p\}$, impossible in a proper coloring. If $p\ne q$, then $F$ is forced to have color $0$. Hence its neighbors $A$ and $J$ are forced to have colors $p$ and $q$, respectively. Then $E$ is forced to have color $0$ because it is adjacent to $A$, and $H$ is forced to have color $0$ because it is adjacent to $J$. This contradicts the proper coloring of the edge $EH$.

Next, $P$ and $Q$ cannot both fail. If $P$ fails, then $B$ is forced to have color $0$. If $Q$ fails, then $I$ is forced to have color $0$. This contradicts the edge $BI$. Hence exactly one of $P$ and $Q$ holds.

Suppose first that $P$ holds and $Q$ fails. Since $Q$ fails, $I$ is forced to have color $0$. The edge $IL$ forces $L$ to have the nonzero color $s_2=s_1$. The edge $LK$ then forces $K$ to have color $0$. The edge $KG$ forces $G$ to have a nonzero color, but $G$ uses ports $2,3,4$ and $s_3\ne s_4$, so $G$ is forced to have color $0$, a contradiction.

Finally suppose that $Q$ holds and $P$ fails. Since $P$ fails, $B$ is forced to have color $0$. The edge $BH$ forces $H$ to have the nonzero color $s_3$. Also, $C$ uses ports $1,2,3$ and $s_1\ne s_2$, so $C$ is forced to have color $0$. The edge $CJ$ forces $J$ to have the nonzero color $s_4$. But $Q$ says $s_3=s_4$, contradicting the proper coloring of the edge $HJ$. This proves the proposition.
\end{proof}

As mentioned before, using the construction of Asplund and Gr\"{u}nbaum with their filter-bed replaced by the gadget in Figure~\ref{fig:filter-12-4}, we get the following. The proof below is an adaptation of the proof of  Asplund and Gr\"{u}nbaum \cite{AsplundGrunbaum1960} with only minor changes.

\chromaticSixTheorem*

\begin{proof}
Asplund and Gr\"unbaum~\cite{AsplundGrunbaum1960} showed that every triangle-free
rectangle family is $6$-colorable.  It is therefore enough to construct a
triangle-free family which is not $5$-colorable. 
The construction idea is exactly the same as the one used in~\cite{AsplundGrunbaum1960} with a slight change in the way we describe it. The family has a hierarchical structure mimicking a complete rooted $4$-ary tree $\mathcal{T}$ with depth $4$. At depth $d \in \{0, 1, 2, 3, 4\}$, there are $4^d$ nodes. The internal nodes in the tree correspond to horizontal filter-beds (defined analogously to horizontal packages) and the leaves correspond to vertical filter-beds (again defined analogously to vertical packages). 
For any node $N$ in the tree, denote the filter-bed corresponding to $N$ 
as $\filterbed(N)$ and its container as $\container(N)$.
We place these packages so that the following conditions are satisfied:
\begin{enumerate}
    \item The containers of all filter-beds corresponding to nodes in $\mathcal{T}$ at the same depth have the same $y$-range and disjoint $x$-ranges. 
    \item For any non-root node $C$ and its parent $P$, if $C$ is $i^{th}$ child of $P$ then 
    the $x$-range of $\container(C)$ is contained in the $x$-range of the $i^{th}$ port $\Pi_i$ of $\filterbed(P)$. 
    \item For any leaf $L$ in $\mathcal{T}$, the $y$-range of the port $\Pi_i$ of $\filterbed(L)$ contains the $y$-range of internal nodes at depth $i-1$ for all $i \in \{1,2,3,4\}$. 
\end{enumerate}
It is easy to see that there is a placement satisfying the above constraints (see~\cite{AsplundGrunbaum1960} for details).
First note that the total number of rectangles in the construction is $12$ times the number of nodes in the tree since each node corresponds to a filter-bed with 12 rectangles. This is $12 \cdot (1 + 4 + 4^2 + 4^3 + 4^4) = 4092$.

We first check that the constructed family is triangle-free. Each filter-bed is clearly triangle-free. Distinct filter-beds at the same depth have disjoint $x$-ranges, and horizontal filter-beds at different internal depths have disjoint $y$-ranges, so those pairs are disjoint. A leaf (vertical) filter-bed can meet a horizontal filter-bed only when the latter lies on that leaf's root-to-leaf path, and by (P0) such intersections occur only through the corresponding pair of ports. Since rectangles using a fixed port are pairwise disjoint (P2), no triangle can use rectangles from two different filter-beds. Hence the whole family is triangle-free.

We now claim that the intersection graph of the rectangles in the construction above is not $5$-colorable. 
For contradiction assume that there is a $5$-coloring.
By property $(P4')$ for every package in the construction, there is one port so that the rectangles using that port use at least $3$ colors - we call such a port {\em tricolored}. The main idea is to show that there exists a pair $(H, V)$ where $H$ is a horizontal filter-bed and $V$ is a vertical filter-bed such that a tricolored port $p$ of $H$ horizontally crosses a tricolored port $q$ of $V$. Then, all rectangles using $p$ in $H$ intersect all rectangles using $q$ in $V$ and thus they together must use $6$ colors, contradicting the $5$-colorability assumption.
Now, note that there exists a root to leaf path $v_1 - v_2 - v_3 - v_4 - v_5$ so that for any $i \in \{1,2,3,4\}$ if $v_{i+1}$ is the $j^{th}$ child of $v_i$ then the $j^{th}$ port of $\filterbed(v_i)$ is tricolored.
Now consider $\filterbed(v_5)$ which is vertical and note that its $k^{th}$ port is crossed by a tricolored port of $v_{k}$. Since one of the ports of $v_5$ must be tricolored, we obtain the pair $(H, V)$ with the desired properties. The proof follows.
\end{proof}

\section{Conclusion and open problems}\label{sec:conclusion}

We have given a simple construction that disproves Wegner's conjecture and shows that the standard LP relaxation with clique constraints for MISR has integrality gap at least $5/2-\varepsilon$ for every $\varepsilon>0$.  
Several questions remain open.
\begin{itemize}
\item What is the true supremum of $\tau(\Rcal)/\nu(\Rcal)$ for axis-parallel rectangle families $\Rcal$?  
\item What is the exact integrality gap of the LP for MISR? 
\item Is there a finite package with rectangles to port ratio exactly $5/2$ or larger? 
\end{itemize}

\section*{Acknowledgements}
Constructions of the initial counterexamples relied heavily on trial and error. OpenAI's GPT-5.5 Pro was used extensively to search for suitable constructions, often by generating code for these searches and for auxiliary verification in Lean. Once promising constructions emerged, the authors distilled the ideas into clear, concise, human-readable proofs. The authors also used integer programs generated by GPT-5.5 Pro to verify optimality of the constructions; however, the correctness of the proofs does not rely on these auxiliary verifications. OpenAI Codex was also used for drawing the figures and for initial drafts of several portions of the text. Anthropic's Claude Pro was used to identify typographical errors. All mathematical proofs and the final text were reviewed and rewritten by the authors.

\bibliographystyle{plain}
\bibliography{refs}

@inproceedings{Chalermsook11,
  author       = {Parinya Chalermsook},
  editor       = {Leslie Ann Goldberg and
                  Klaus Jansen and
                  R. Ravi and
                  Jos{\'{e}} D. P. Rolim},
  title        = {Coloring and Maximum Independent Set of Rectangles},
  booktitle    = {Approximation, Randomization, and Combinatorial Optimization. Algorithms
                  and Techniques - 14th International Workshop, {APPROX} 2011, and 15th
                  International Workshop, {RANDOM} 2011, Princeton, NJ, USA, August
                  17-19, 2011. Proceedings},
  series       = {Lecture Notes in Computer Science},
  volume       = {6845},
  pages        = {123--134},
  publisher    = {Springer},
  year         = {2011},
  url          = {https://doi.org/10.1007/978-3-642-22935-0\_11},
  doi          = {10.1007/978-3-642-22935-0\_11},
  bibsource    = {dblp computer science bibliography, https://dblp.org}
}

@inproceedings{ChalermsookW21,
  author       = {Parinya Chalermsook and
                  Bartosz Walczak},
  editor       = {D{\'{a}}niel Marx},
  title        = {Coloring and Maximum Weight Independent Set of Rectangles},
  booktitle    = {Proceedings of the 2021 {ACM-SIAM} Symposium on Discrete Algorithms,
                  {SODA} 2021, Virtual Conference, January 10 - 13, 2021},
  pages        = {860--868},
  publisher    = {{SIAM}},
  year         = {2021},
  url          = {https://doi.org/10.1137/1.9781611976465.54},
  doi          = {10.1137/1.9781611976465.54},
  bibsource    = {dblp computer science bibliography, https://dblp.org}
}

@article{wegner1965,
  author  = {Wegner, Gerd},
  title   = {\"{U}ber eine kombinatorisch-geometrische Frage von Hadwiger und Debrunner},
  journal = {Israel Journal of Mathematics},
  volume  = {3},
  pages   = {187--198},
  year    = {1965}
}

@article{gyarfasLehel1985,
  author  = {Gy{\'a}rf{\'a}s, Andr{\'a}s and Lehel, Jen{\H{o}}},
  title   = {Covering and Coloring Problems for Relatives of Intervals},
  journal = {Discrete Mathematics},
  volume  = {55},
  number  = {2},
  pages   = {167--180},
  year    = {1985}
}

@article{karolyi1991,
  author  = {K{\'a}rolyi, Gyula},
  title   = {On Point Covers of Parallel Rectangles},
  journal = {Periodica Mathematica Hungarica},
  volume  = {23},
  number  = {2},
  pages   = {105--107},
  year    = {1991}
}

@article{karolyiTardos1996,
  author  = {K{\'a}rolyi, Gyula and Tardos, G{\'a}bor},
  title   = {On Point Covers of Multiple Intervals and Axis-Parallel Rectangles},
  journal = {Combinatorica},
  volume  = {16},
  number  = {2},
  pages   = {213--222},
  year    = {1996}
}

@article{fonderflaassKostochka1993,
  author  = {Fon-Der-Flaass, Dmitry G. and Kostochka, Alexandr V.},
  title   = {Covering Boxes by Points},
  journal = {Discrete Mathematics},
  volume  = {120},
  number  = {1--3},
  pages   = {269--275},
  year    = {1993}
}

@article{aronovEzraSharir2010,
  author  = {Aronov, Boris and Ezra, Esther and Sharir, Micha},
  title   = {Small-Size {$\epsilon$}-Nets for Axis-Parallel Rectangles and Boxes},
  journal = {SIAM Journal on Computing},
  volume  = {39},
  number  = {7},
  pages   = {3248--3282},
  year    = {2010}
}

@article{correaEtAl2015,
  author  = {Correa, Jos{\'e} R. and Feuilloley, Laurent and P{\'e}rez-Lantero, Pablo and Soto, Jos{\'e} A.},
  title   = {Independent and Hitting Sets of Rectangles Intersecting a Diagonal Line: Algorithms and Complexity},
  journal = {Discrete \& Computational Geometry},
  volume  = {53},
  number  = {2},
  pages   = {344--365},
  year    = {2015}
}

@article{chudnovskySpirklZerbib2018,
  author  = {Chudnovsky, Maria and Spirkl, Sophie and Zerbib, Shira},
  title   = {Piercing Axis-Parallel Boxes},
  journal = {The Electronic Journal of Combinatorics},
  volume  = {25},
  number  = {1},
  pages   = {P1.70},
  year    = {2018}
}

@article{chenDumitrescu2020,
  author  = {Chen, Ke and Dumitrescu, Adrian},
  title   = {On Wegner's Inequality for Axis-Parallel Rectangles},
  journal = {Discrete Mathematics},
  volume  = {343},
  number  = {12},
  pages   = {112091},
  year    = {2020}
}

@article{tomon2023,
  author  = {Tomon, Istv{\'a}n},
  title   = {Lower Bounds for Piercing and Coloring Boxes},
  journal = {Advances in Mathematics},
  volume  = {435},
  pages   = {109360},
  year    = {2023}
}

@inproceedings{chalermsookChuzhoy2009,
  author    = {Chalermsook, Parinya and Chuzhoy, Julia},
  title     = {Maximum Independent Set of Rectangles},
  booktitle = {Proceedings of the Twentieth Annual ACM-SIAM Symposium on Discrete Algorithms (SODA)},
  pages     = {892--901},
  publisher = {SIAM},
  year      = {2009}
}

@inproceedings{adamaszekWiese2013,
  author    = {Adamaszek, Anna and Wiese, Andreas},
  title     = {Approximation Schemes for Maximum Weight Independent Set of Rectangles},
  booktitle = {Proceedings of the 54th IEEE Annual Symposium on Foundations of Computer Science (FOCS)},
  pages     = {400--409},
  publisher = {IEEE Computer Society},
  year      = {2013}
}

@inproceedings{mitchell2021,
  author    = {Mitchell, Joseph S. B.},
  title     = {Approximating Maximum Independent Set for Rectangles in the Plane},
  booktitle = {Proceedings of the 62nd IEEE Annual Symposium on Foundations of Computer Science (FOCS)},
  pages     = {339--350},
  publisher = {IEEE Computer Society},
  year      = {2021}
}

@inproceedings{galvezEtAl2022,
  author    = {G{\'a}lvez, Waldo and Khan, Arindam and Mari, Mathieu and M{\"o}mke, Tobias and Pittu, Madhusudhan Reddy and Wiese, Andreas},
  title     = {A 3-Approximation Algorithm for Maximum Independent Set of Rectangles},
  booktitle = {Proceedings of the 2022 ACM-SIAM Symposium on Discrete Algorithms (SODA)},
  pages     = {894--905},
  publisher = {SIAM},
  year      = {2022}
}

@article{caoduroEtAl2023,
  author  = {Caoduro, Marco and Cslovjecsek, Jana and Pilipczuk, Micha{\l} and W{\k{e}}grzycki, Karol},
  title   = {On the Independence Number of Intersection Graphs of Axis-Parallel Segments},
  journal = {Journal of Computational Geometry},
  volume  = {14},
  number  = {1},
  pages   = {144--156},
  year    = {2023}
}

@article{fowlerPatersonTanimoto1981,
  author  = {Fowler, Robert J. and Paterson, Michael S. and Tanimoto, Steven L.},
  title   = {Optimal Packing and Covering in the Plane are {NP}-Complete},
  journal = {Information Processing Letters},
  volume  = {12},
  number  = {3},
  pages   = {133--137},
  year    = {1981}
}

@inproceedings{chuzhoyEne2016,
  author    = {Chuzhoy, Julia and Ene, Alina},
  title     = {On Approximating Maximum Independent Set of Rectangles},
  booktitle = {Proceedings of the 57th IEEE Annual Symposium on Foundations of Computer Science (FOCS)},
  pages     = {820--829},
  publisher = {IEEE Computer Society},
  year      = {2016}
}

@article{AsplundGrunbaum1960,
  author  = {Asplund, Edgar and Gr{\"u}nbaum, Branko},
  title   = {On a coloring problem},
  journal = {Mathematica Scandinavica},
  volume  = {8},
  pages   = {181--188},
  year    = {1960},
  doi     = {10.7146/math.scand.a-10607}
}

@book{Matousek2002,
  author    = {Ji{\v{r}}{\'\i} Matou{\v{s}}ek},
  title     = {Lectures on Discrete Geometry},
  series    = {Graduate Texts in Mathematics},
  volume    = {212},
  publisher = {Springer},
  address   = {New York},
  year      = {2002},
  isbn      = {978-0-387-95373-1}
}

@book{Berge1989,
  author    = {Claude Berge},
  title     = {Hypergraphs: Combinatorics of Finite Sets},
  series    = {North-Holland Mathematical Library},
  volume    = {45},
  publisher = {North-Holland},
  address   = {Amsterdam},
  year      = {1989},
  isbn      = {9780444874894}
}

@article{PachTardos2013,
  author  = {J{\'a}nos Pach and G{\'a}bor Tardos},
  title   = {Tight Lower Bounds for the Size of Epsilon-Nets},
  journal = {Journal of the American Mathematical Society},
  volume  = {26},
  number  = {3},
  pages   = {645--658},
  year    = {2013},
  doi     = {10.1090/S0894-0347-2012-00759-0}
}

@article{GalvezKhanMariMomkeReddyWiese2021,
  author        = {Waldo G{\'a}lvez and
                   Arindam Khan and
                   Mathieu Mari and
                   Tobias M{\"o}mke and
                   Madhusudhan Reddy and
                   Andreas Wiese},
  title         = {A {(2+$\epsilon$)}-Approximation Algorithm for Maximum Independent Set of Rectangles},
  journal       = {CoRR},
  volume        = {abs/2106.00623},
  year          = {2021},
  archivePrefix = {arXiv},
  eprint        = {2106.00623},
  primaryClass  = {cs.CG},
  url           = {https://arxiv.org/abs/2106.00623}
}

@article{gyarfas1987problems,
  title={Problems from the world surrounding perfect graphs},
  author={Gy{\'a}rf{\'a}s, Andr{\'a}s},
  journal={Applicationes Mathematicae},
  volume={19},
  number={3-4},
  pages={413--441},
  year={1987},
  publisher={Polska Akademia Nauk. Instytut Matematyczny PAN}
}

@article{PAWLIK20146,
title = {Triangle-free intersection graphs of line segments with large chromatic number},
journal = {Journal of Combinatorial Theory, Series B},
volume = {105},
pages = {6-10},
year = {2014},
issn = {0095-8956},
doi = {https://doi.org/10.1016/j.jctb.2013.11.001},
url = {https://www.sciencedirect.com/science/article/pii/S009589561300083X},
author = {Arkadiusz Pawlik and Jakub Kozik and Tomasz Krawczyk and Michał Lasoń and Piotr Micek and William T. Trotter and Bartosz Walczak}
}

@article{scott2020survey,
  title={A survey of $\chi$-boundedness},
  author={Scott, Alex and Seymour, Paul},
  journal={Journal of Graph Theory},
  volume={95},
  number={3},
  pages={473--504},
  year={2020},
  publisher={Wiley Online Library}
}

\end{document}